\documentclass[journal]{IEEEtran}
\usepackage{booktabs} 
\usepackage{makecell}
\usepackage{graphicx}
\usepackage{amsmath, amsthm, amssymb}
\usepackage{algorithm}
\usepackage{algpseudocode}
\usepackage{multirow}
\usepackage{hyperref} 
\usepackage{lineno, url} 
\usepackage{siunitx}

\newtheorem{theorem}{Theorem}

\newtheorem{assumption}{Assumption}

\hypersetup{
    colorlinks=true,
    linkcolor=blue,
    filecolor=magenta,
    urlcolor=cyan,
    citecolor=red, 
}

\begin{document}

\title{PQS-BFL: A Post-Quantum Secure Blockchain-based Federated Learning Framework}

\author{Daniel~Commey
        and~Garth~V.~Crosby%
\thanks{D. Commey is with the Department of Multidisciplinary Engineering, Texas A\&M University, College Station, TX 77843 USA (e-mail: dcommey@tamu.edu).}%
\thanks{G. V. Crosby is with the Department of Engineering Technology \& Industrial Distribution, Texas A\&M University, College Station, TX 77843 USA (e-mail: gvcrosby@tamu.edu).}}%

\maketitle

\begin{abstract}
Federated Learning (FL) enables collaborative model training while preserving data privacy, but its classical cryptographic underpinnings are vulnerable to quantum attacks. This vulnerability is particularly critical in sensitive domains like healthcare. This paper introduces PQS-BFL (Post-Quantum Secure Blockchain-based Federated Learning), a framework integrating post-quantum cryptography (PQC) with blockchain verification to secure FL against quantum adversaries. We employ ML-DSA-65 (a FIPS 204 standard candidate, formerly Dilithium) signatures to authenticate model updates and leverage optimized smart contracts for decentralized validation. Extensive evaluations on diverse datasets (MNIST, SVHN, HAR) demonstrate that PQS-BFL achieves efficient cryptographic operations (average PQC sign time: \SI{0.65}{ms}, verify time: \SI{0.53}{ms}) with a fixed signature size of \SI{3309}{Bytes}. Blockchain integration incurs a manageable overhead, with average transaction times around \SI{4.8}{s} and gas usage per update averaging \SI{1.72e6}{} units for PQC configurations. Crucially, the cryptographic overhead relative to transaction time remains minimal (around 0.01-0.02\% for PQC with blockchain), confirming that PQC performance is not the bottleneck in blockchain-based FL. The system maintains competitive model accuracy (e.g., over 98.8\% for MNIST with PQC) and scales effectively, with round times showing sublinear growth with increasing client numbers. Our open-source implementation and reproducible benchmarks validate the feasibility of deploying long-term, quantum-resistant security in practical FL systems.
\end{abstract}

\begin{IEEEkeywords}
Post-quantum cryptography, blockchain, federated learning, digital signatures, lattice-based cryptography, ML-DSA, Dilithium, security.
\end{IEEEkeywords}

\IEEEpeerreviewmaketitle

\section{Introduction}
\label{sec:introduction}

Federated learning (FL) has emerged as a pivotal paradigm for collaborative machine learning, enabling multiple parties to train models on decentralized data without exposing raw information \cite{mcmahan_communication-efficient_2017}. This privacy-preserving approach holds immense potential, particularly in domains handling sensitive data, such as healthcare \cite{rieke_future_2020, kaissis_secure_2020}. However, the security foundations of conventional FL systems rely heavily on classical cryptographic algorithms like RSA and ECDSA for tasks such as participant authentication and update integrity verification \cite{bonawitz_practical_2017}.

The rapid advancement of quantum computing poses a significant threat to these classical cryptographic primitives. Shor's algorithm, for instance, demonstrates that large-scale quantum computers can efficiently break the mathematical problems underlying RSA and ECDSA \cite{shor_polynomial-time_1999, preskill_quantum_2019}. This impending "quantum threat" necessitates a transition towards post-quantum cryptography (PQC)—cryptographic systems designed to resist attacks from both classical and quantum computers \cite{bernstein_post-quantum_2017}. The National Institute of Standards and Technology (NIST) is actively standardizing PQC algorithms, with lattice-based schemes like ML-DSA (formerly Dilithium) emerging as leading candidates for digital signatures \cite{alagic_status_2022}.

Simultaneously, blockchain technology has been explored as a means to enhance the security, transparency, and auditability of FL systems \cite{bonawitz_practical_2017, harris_decentralized_2019}. By recording model updates or their hashes on an immutable ledger, blockchain can provide verifiable proof of contributions and prevent malicious tampering. However, most existing blockchain platforms also rely on classical cryptography, inheriting the same vulnerabilities to quantum attacks.

These simultaneous advances and emerging threats motivate a unified approach that secures FL ecosystems against future quantum adversaries while leveraging the benefits of blockchain verification.

\subsection{Motivation}
The long-term viability of FL, especially in sectors like healthcare where data sensitivity and regulatory compliance (e.g., HIPAA, GDPR) demand enduring security, hinges on adopting quantum-resistant measures. Relying on classical cryptography creates a future vulnerability: data encrypted or systems secured today could be compromised retrospectively once powerful quantum computers become available. Integrating PQC, specifically quantum-resistant digital signatures, is crucial for ensuring the authenticity and integrity of model updates exchanged within an FL system over extended periods. Furthermore, combining PQC with blockchain provides a robust, decentralized mechanism for verifying these updates immutably, fostering trust among participants, and avoiding a central authority that could become a single point of failure.

\subsection{Challenges}
Designing and implementing a Post-Quantum Secure Blockchain-based Federated Learning (PQS-BFL) framework presents several key challenges:
\begin{itemize}
    \item \textbf{PQC Performance Overhead:} PQC algorithms, particularly lattice-based signatures, often have larger key sizes, signature sizes, and potentially higher computational costs compared to their classical counterparts \cite{alagic_status_2022}. Integrating these into FL must not impose prohibitive performance penalties.
    \item \textbf{Blockchain Integration Costs:} Storing PQC signatures (which are larger) and performing PQC verification within smart contracts can lead to increased gas costs and transaction latency on blockchain platforms \cite{zhang_blockchain_2021}. Efficient smart contract design is critical.
    \item \textbf{Scalability:} The system must scale effectively to accommodate a potentially large number of participating clients, ensuring that cryptographic and blockchain operations do not become bottlenecks as the system grows \cite{bonawitz_towards_2019}.
    \item \textbf{Verification Efficiency:} Verifying potentially numerous PQC signatures submitted by clients each round needs to be efficient to maintain reasonable round times.
\end{itemize}

\subsection{Our Contributions}
This paper introduces PQS-BFL, a framework designed to address these challenges. Our primary contributions are:
\begin{enumerate}
    \item \textbf{Quantum-Resistant Signature Integration:} We successfully integrate the ML-DSA-65 signature scheme into an FL framework for authenticating model updates, providing security against known quantum attacks.
    \item \textbf{Blockchain-Based PQC Verification:} We design and implement a blockchain system with smart contracts capable of verifying ML-DSA-65 signatures, enabling decentralized and immutable validation of FL contributions.
    \item \textbf{Comprehensive Empirical Evaluation:} We conduct extensive experiments comparing PQS-BFL (using ML-DSA-65) against ECDSA and a no-cryptography baseline across multiple datasets (MNIST, SVHN, HAR) and client scales (3, 10, 30), analyzing model accuracy, cryptographic performance, blockchain overhead (gas, transaction time), and overall system scalability.
    \item \textbf{Overhead Analysis:} We quantify the overhead introduced by PQC and blockchain, demonstrating that cryptographic overhead is minimal ($<$0.1\%) within the context of blockchain transaction times.
    \item \textbf{Open-Source Framework:} We release our implementation (\texttt{PQS-BFL}) and results to facilitate reproducibility and further research (\url{https://github.com/dcommey/PQS-BFL}).
\end{enumerate}

\subsection{Practical Impact}
PQS-BFL demonstrates the practical feasibility of building quantum-resistant FL systems. Our results show that the performance overhead associated with current PQC standards (ML-DSA-65) is manageable, especially when deployed within a blockchain context. This provides a pathway for organizations, particularly in security-critical sectors like healthcare, to proactively upgrade their collaborative ML infrastructures for long-term, quantum-resilient security without sacrificing operational efficiency or model performance.

\subsection{Paper Organization}
The remainder of this paper is organized as follows. Section~\ref{sec:related} reviews related work. Section~\ref{sec:background} provides background on FL, PQC, and blockchain concepts. Section~\ref{sec:design} details the PQS-BFL system architecture and protocol. Section~\ref{sec:setup} describes the experimental setup. Section~\ref{sec:results} presents and discusses the experimental results. Section~\ref{sec:limitations} discusses limitations and future work. Section~\ref{sec:conclusion} concludes the paper.

\section{Related Work}
\label{sec:related}

Our work on PQS-BFL integrates advancements from several fields: the security of federated learning (FL), the development of post-quantum cryptography (PQC), the application of blockchain technology to machine learning (ML), and performance considerations like gas optimization. This section reviews key contributions in these areas and identifies the gap addressed by our research.

\subsection{Federated Learning Security}
Initial FL frameworks, like FedAvg \cite{mcmahan_communication-efficient_2017}, prioritized communication efficiency and basic privacy by keeping data local. Subsequent research focused on enhancing security against various threats. Secure aggregation protocols \cite{bonawitz_practical_2017} were developed using classical cryptographic techniques (like threshold cryptography or multiparty computation) to prevent the central server from inferring individual client updates during aggregation. Differential privacy \cite{geyer_differentially_2018, agarwal_cpsgd_2018} adds calibrated noise to updates to provide formal privacy guarantees against inference attacks. Homomorphic encryption \cite{gursoy_privacy-preserving_2016, zhang_batchcrypt_2020} allows computations (like aggregation) on encrypted data, further protecting update confidentiality. Byzantine resilience techniques \cite{blanchard_machine_2017, yin_byzantine-robust_2018} aim to detect or mitigate the impact of malicious clients submitting faulty or poisonous updates designed to disrupt training. While these methods address important classical security and privacy concerns, they predominantly rely on cryptographic primitives (e.g., Diffie-Hellman, RSA, ECC) that are known to be vulnerable to attacks by sufficiently powerful quantum computers \cite{shor_polynomial-time_1999}.

\subsection{Post-Quantum Cryptography (PQC)}
The anticipation of quantum computing capabilities has driven the development and standardization of PQC algorithms resistant to quantum attacks \cite{bernstein_post-quantum_2017}. The NIST PQC standardization process \cite{alagic_status_2022, alagic_status_2019} has been pivotal, identifying promising candidates based on different mathematical foundations, including lattices, codes, hashes, and multivariate equations \cite{stewart_committing_2018}. Lattice-based cryptography \cite{regev_lattices_2009, peikert_decade_2016}, built on the presumed hardness of problems like LWE and SIS, is particularly prominent \cite{ducas_crystals-dilithium_2018}. ML-DSA (Dilithium) \cite{kiltz_concrete_2018, lyubashevsky_lattice-based_2021}, selected by NIST for signature standardization, is a key example used in our work. Hash-based signatures like SPHINCS+ (also selected by NIST) and XMSS offer security based only on hash function properties but can have limitations like statefulness or larger signature sizes \cite{bernstein_sphincs_2015, buchmann_security_2011}. Research is ongoing to evaluate PQC performance \cite{paquin_benchmarking_2020, basu_nist_2019} and integrate these schemes into existing protocols \cite{bernstein_post-quantum_2017, bindel_transitioning_2017}. Applying PQC to secure distributed ML systems like FL is an emerging necessity. Some works have started exploring PQC for specific FL aspects, such as secure aggregation using lattice-based homomorphic encryption \cite{xu_laf_2022} or key exchange \cite{ding_post-quantum_2024}, but end-to-end frameworks combining PQC signatures with blockchain verification are less common.

\subsection{PQC Algorithm Selection for FL Systems}
\label{subsec:pqc_selection}

The selection of an appropriate PQC algorithm for integration with blockchain-based FL requires careful consideration of performance characteristics, security levels, and standardization status. As part of our research methodology, we conducted an extensive comparative analysis of three leading NIST PQC digital signature finalists: ML-DSA-65 (formerly Dilithium), Falcon-512, and SPHINCS+-SHA2-128s \cite{commey_post-quantum_2025}. Table~\ref{tab:pqc_comparison} presents the key performance metrics of these algorithms based on our benchmarking using the liboqs library.

\begin{table}[!htbp]
\caption{Performance Comparison of Selected PQC Signature Schemes}
\label{tab:pqc_comparison}
\centering
\sisetup{round-mode=places, round-precision=3}
\resizebox{\linewidth}{!}{%
\begin{tabular}{l S[round-precision=3, table-format=2.3] 
                  S[round-precision=3, table-format=3.3] 
                  S[round-precision=3, table-format=1.3] 
                  S[round-precision=0, table-format=4.0]}
\toprule
{Scheme} & {Key Gen (ms)} & {Sign (ms)} & {Verify (ms)} & {Sig Size (B)} \\
\midrule
ML-DSA-65 & 0.142 & 0.656 & 0.536 & 3309 \\
Falcon-512 & 5.414 & 3.282 & 0.301 & 666 \\
SPHINCS+-SHA2-128s & 17.403 & 131.926 & 3.637 & 7856 \\
\bottomrule
\end{tabular}%
}
\end{table}

\begin{table}[!htbp]
\caption{Security Analysis of Selected PQC Signature Schemes}
\label{tab:pqc_security}
\centering
\resizebox{\linewidth}{!}{%
\begin{tabular}{l >{\centering\arraybackslash}p{1.8cm} >{\centering\arraybackslash}p{1.8cm} >{\centering\arraybackslash}p{1.5cm} p{3.5cm}}
\toprule
{Scheme} & {Classical Security} & {Quantum Security} & {NIST Level} & {Security Foundation} \\
\midrule
ML-DSA-65 & 125 bits & 64 bits & Level 2 & MLWE and MSIS problems \\
Falcon-512 & 118 bits & 58 bits & Level 1+ & NTRU and SIS problems \\
SPHINCS+-SHA2-128s & 133 bits & 66 bits & Level 2 & Hash function properties \\
\bottomrule
\end{tabular}%
}
\end{table}

Our benchmarking revealed key performance characteristics that influenced our decision:

\begin{itemize}
    \item \textbf{Verification Time:} In blockchain-based FL, signature verification occurs frequently and directly impacts transaction confirmation times. ML-DSA-65 (0.536 ms) and Falcon-512 (0.301 ms) offer significantly faster verification than SPHINCS+-SHA2-128s (3.637 ms).
    
    \item \textbf{Signing Time:} Clients sign updates every training round. ML-DSA-65 (0.656 ms) provides the fastest signing operation, considerably outperforming Falcon-512 (3.282 ms) and especially SPHINCS+-SHA2-128s (131.926 ms), whose signing time would introduce noticeable delays in FL rounds.
    
    \item \textbf{Signature Size:} Larger signatures increase blockchain transaction sizes and gas costs. Falcon-512 produces the smallest signatures (666 B), with ML-DSA-65 (3309 B) offering a middle ground, and SPHINCS+-SHA2-128s requiring substantially larger payloads (7856 B).
    
    \item \textbf{Security Level:} As shown in Table~\ref{tab:pqc_security}, all three schemes provide adequate security against quantum attacks, with SPHINCS+-SHA2-128s offering slightly stronger guarantees but at significant performance cost.
    
    \item \textbf{Standardization Status:} ML-DSA-65 has been selected as a FIPS 204 standard, providing greater assurance of long-term support and optimization compared to the other candidates.
\end{itemize}

\begin{figure}[!htbp]
    \centering
    \includegraphics[width=\linewidth]{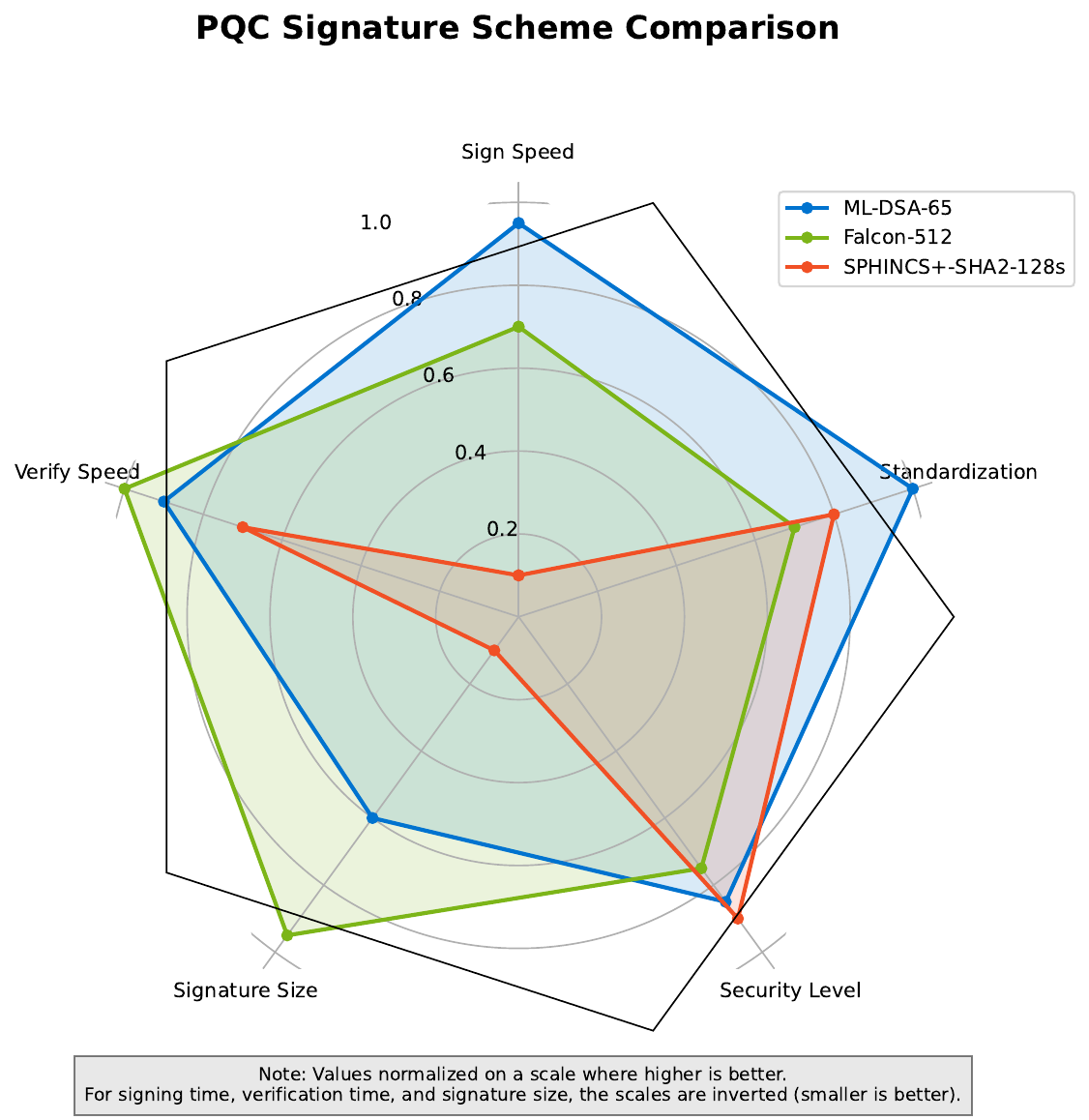}
    \caption{Radar chart comparing the relative strengths of PQC signature schemes across key metrics. Values are normalized on a logarithmic scale where appropriate, with higher scores indicating better performance.}
    \label{fig:pqc_radar}
\end{figure}

When evaluating these schemes for blockchain-based FL integration, we also assessed their gas consumption in smart contract verification. Our preliminary testing indicated that ML-DSA-65 achieves the best balance of gas efficiency and standardization status, though Falcon-512 offers advantages in signature size.

Based on this analysis, we selected ML-DSA-65 for PQS-BFL implementation due to its balanced profile: fast signing and verification, manageable signature size, strong security guarantees, and NIST standardization status. Figure~\ref{fig:pqc_radar} visualizes these trade-offs across the evaluated schemes.

\subsection{Blockchain-Based Machine Learning}
Blockchain technology offers properties like decentralization, immutability, transparency, and auditability, which can benefit ML systems. Researchers have proposed using blockchain for creating decentralized data marketplaces \cite{ranganthan_decentralized_2018}, providing verifiable audit trails for model training and updates \cite{kim_blockchained_2019, ramanan_baffle_2020}, implementing incentive mechanisms for participation in collaborative learning \cite{kang_incentive_2019, huang_collaboration_2024}, and enhancing the security and trust of FL \cite{harris_decentralized_2019, ma_blockchain-based_2020, qammar_securing_2023, arachchige_analysis_2024}, with applications explored in diverse domains such as autonomous vehicles \cite{pokhrel_federated_2020} and Industry 4.0 \cite{qu_blockchained_2020}. Specifically for FL, blockchain can act as a decentralized registry for participants, a platform for verifying update integrity via smart contracts, or a mechanism for coordinating the aggregation process without a fully trusted central server \cite{weng_deepchain_2019}. However, many existing blockchain-FL integrations rely on the blockchain's native classical cryptography, thus inheriting quantum vulnerabilities.

\subsection{Quantum-Resistant Blockchain Systems}
Efforts are underway to secure blockchain platforms themselves against quantum threats. This involves replacing classical signature schemes (like ECDSA) with PQC alternatives for signing transactions \cite{ding_post-quantum_2024, chalkias_blockchained_2018, stewart_committing_2018}. Platforms like Quantum Resistant Ledger (QRL) \cite{waterland_quantum_2016} utilize hash-based signatures (XMSS) for inherent quantum resistance. Research also explores quantum-resistant consensus mechanisms \cite{zhang_blockchain_2021} and the design of smart contracts that can handle PQC primitives \cite{sun_logic_2021}. Integrating these concepts specifically within the context of verifying FL updates signed with PQC signatures is a necessary step towards fully quantum-resistant decentralized learning.

\subsection{Research Gap and Positioning}
While the fields above have seen significant progress, a gap exists in the practical integration and comprehensive evaluation of PQC, specifically quantum-resistant \textit{signatures} for update authentication, within a \textit{blockchain-verified} \textit{federated learning} system. Some recent works have started exploring this intersection, tackling aspects like hybrid PQC schemes \cite{gurung_quantum_2023, gurung_performance_2023, gurung_secure_2023}, secure communication protocols \cite{gharavi_pqbfl_2025, gharavi_post-quantum_2024}, privacy preservation \cite{ding_post-quantum_2024}, or applying PQC-BFL concepts to specific environments like Mobile Edge Computing \cite{xu_post_2023}. Key examples include:
\begin{itemize}
    \item \textbf{Gurung et al.} \cite{gurung_quantum_2023, gurung_performance_2023, gurung_secure_2023} have proposed several frameworks combining PQC (including Dilithium and hybrid approaches) with blockchain-based FL (BFL), focusing on aspects like device role selection, secure communication models, consensus mechanisms (VRF), and MEC integration. Their work often explores hybrid PQC schemes and performance analysis.
    \item \textbf{Gharavi et al.} propose PQBFL \cite{gharavi_pqbfl_2025, gharavi_post-quantum_2024}, a protocol using PQC (Kyber KEM, possibly others) and blockchain for FL. Their focus includes using ratcheting mechanisms for forward secrecy and post-compromise security across FL rounds, hybrid on-chain/off-chain communication, and reputation management. Their work emphasizes session security more than the signature verification aspect central to PQS-BFL.
    \item \textbf{Ding et al.} \cite{ding_post-quantum_2024} focus on privacy preservation in FL using PQC (Kyber KEM) combined with SMPC techniques like secret sharing and ZKP to prevent gradient leakage, rather than focusing on blockchain verification of signed updates.
    \item Other works mention the \textit{need} for quantum-resistant blockchain for AI/FL in healthcare \cite{rahman_blockchain_2019} or discuss quantum encryption for FL security \cite{chu_cryptoqfl_2023, sorbera_adaptive_2025}, but may not provide specific integrated protocol designs or evaluations combining PQC signatures, FL, and blockchain verification.
\end{itemize}

Our work, PQS-BFL, distinguishes itself by:
\begin{enumerate}
    \item Focusing specifically on integrating a standardized PQC \textit{signature} scheme (ML-DSA-65) for authenticating FL updates.
    \item Implementing and evaluating the performance impact of this integration within a \textit{blockchain verification} framework.
    \item Providing a comprehensive empirical comparison against classical ECDSA and a baseline, analyzing cryptographic performance, FL accuracy, blockchain overhead (gas, transaction time), cryptographic overhead ratio, and scalability across multiple datasets.
    \item Demonstrating the practical feasibility and quantifying the performance trade-offs (especially the low relative overhead of PQC in BC context) for this specific architecture.
\end{enumerate}

Table~\ref{tab:related_work_comparison} provides a comparative overview.

\begin{table*}[!ht]
\caption{Comparison with Related Studies Combining PQC/Blockchain/FL Aspects}
\label{tab:related_work_comparison}
\centering
\footnotesize
\resizebox{\textwidth}{!}{%
\begin{tabular}{@{}l@{} >{\raggedright\arraybackslash}p{3cm} >{\raggedright\arraybackslash}p{2.5cm} >{\raggedright\arraybackslash}p{3cm} >{\raggedright\arraybackslash}p{1.5cm} >{\raggedright\arraybackslash}p{4cm}@{}}
\toprule
\textbf{Study} & \textbf{Focus Area} & \textbf{PQC Integration} & \textbf{Blockchain Use} & \textbf{Federated Learning} & \textbf{Key Differentiator / Relation to PQS-BFL} \\
\midrule
\textbf{This Work (PQS-BFL)} & End-to-end framework for quantum-resistant FL update verification using blockchain. Performance evaluation. & Yes (ML-DSA-65 signatures for update authentication) & Yes (Smart contract for PQC signature verification, recording hashes) & Yes (Core application) & Focus on PQC signature verification on-chain, detailed performance analysis (crypto, FL, BC, overhead), open implementation. \\
\addlinespace
Gurung et al. \cite{gurung_quantum_2023, gurung_performance_2023, gurung_secure_2023} & Hybrid PQC for BFL, consensus mechanisms, role selection, MEC integration. & Yes (Hybrid: Dilithium/Falcon + XMSS proposed; also mentions QFL) & Yes (Consensus, BFL framework) & Yes & Explores hybrid PQC, VRF for consensus, broader system aspects. Less focus on detailed benchmark of just PQC sig verification overhead. \\
\addlinespace
Gharavi et al. \cite{gharavi_pqbfl_2025, gharavi_post-quantum_2024} & Secure communication protocol for FL rounds using PQC and blockchain. & Yes (Kyber KEM, potentially others, focuses on key exchange/session security via ratcheting) & Yes (Key establishment facilitation, transaction ledger, reputation mgmt) & Yes & Focus on forward secrecy/post-compromise security for iterative FL rounds using ratcheting; uses PQC KEMs more than signatures for core security. \\
\addlinespace
Ding et al. \cite{ding_post-quantum_2024} & Privacy preservation (anti-gradient leakage) in FL using PQC and SMPC. & Yes (Kyber KEM for secure channel) & Not specified / No & Yes & Focus on privacy via SMPC and PQC KEM, not blockchain verification or PQC signatures for authentication. \\
\addlinespace
Classical Secure FL \cite{bonawitz_practical_2017, geyer_differentially_2018} & Privacy (Secure Aggregation, DP), Byzantine Resilience using classical crypto. & No & Generally No (Some BFL variants exist but often use classical crypto) & Yes & Addresses different security aspects (privacy, Byzantine) but lacks quantum resistance. Serves as a baseline for non-PQC security. \\
\addlinespace
Classical BFL \cite{harris_decentralized_2019, ma_blockchain-based_2020, qammar_securing_2023} & Using blockchain for verification, incentives, auditability in FL. & No (Uses classical signatures like ECDSA) & Yes (Verification, Audit, etc.) & Yes & Demonstrates BC+FL benefits but lacks quantum resistance, providing a baseline for PQC performance comparison. \\
\bottomrule
\end{tabular}%
}
\end{table*}

By focusing on the practical integration and performance implications of using standardized PQC signatures for update verification within a blockchain-based FL system, PQS-BFL provides crucial empirical evidence supporting the transition towards quantum-resistant decentralized machine learning.

\section{Background}
\label{sec:background}

This section provides the necessary foundational concepts concerning federated learning, the principles of post-quantum cryptography with a focus on lattice-based digital signatures, and the relevant aspects of blockchain technology integration that underpin the PQS-BFL framework.

\subsection{Federated Learning (FL)}
Federated learning represents a paradigm shift in distributed machine learning, enabling collaborative model training across multiple decentralized clients (e.g., mobile devices, hospitals, organizations) without necessitating the sharing of their raw, potentially sensitive, local datasets \cite{mcmahan_communication-efficient_2017}. This approach inherently enhances data privacy compared to traditional centralized methods.

\subsubsection{Mathematical Framework}
Consider a set of $M$ clients, denoted by $C_i$ for $i \in \{1, \ldots, M\}$. Each client $C_i$ holds a local dataset $\mathcal{D}_i$, containing $n_i = |\mathcal{D}_i|$ data samples. The total dataset size across all clients is $n = \sum_{i=1}^M n_i$. The primary objective of FL is to train a global model, parameterized by a vector $w$, that minimizes a global objective function $\mathcal{L}(w)$. This global function is typically formulated as a weighted average of the local objective functions $\mathcal{L}_i(w)$, where the weights reflect the relative contribution of each client's data:
\begin{equation}
    \min_{w} \mathcal{L}(w) = \sum_{i=1}^M \frac{n_i}{n} \mathcal{L}_i(w).
\end{equation}
Here, $\mathcal{L}_i(w)$ represents the empirical loss of the model $w$ evaluated on client $C_i$'s local data $\mathcal{D}_i$. It is commonly defined as the average loss over the client's local samples:
\begin{equation}
    \mathcal{L}_i(w) = \frac{1}{n_i} \sum_{(x_j, y_j) \in \mathcal{D}_i} \ell(w; x_j, y_j),
\end{equation}
where $(x_j, y_j)$ is a data sample (features $x_j$, label $y_j$) from $\mathcal{D}_i$, and $\ell$ is a sample-wise loss function appropriate for the task (e.g., cross-entropy for classification, mean squared error for regression).

\subsubsection{Federated Averaging (FedAvg)}
Federated Averaging (FedAvg) \cite{mcmahan_communication-efficient_2017} is the canonical algorithm for achieving the FL objective. A typical communication round $t$ proceeds as follows:
\begin{enumerate}
    \item \textbf{Distribution:} A central server (or coordinating entity) distributes the current global model parameters $w_{t-1}$ to a selected subset of clients $\mathcal{S}_t \subseteq \{C_1, \ldots, C_M\}$.
    \item \textbf{Local Training:} Each selected client $C_i \in \mathcal{S}_t$ initializes its local model with $w_{t-1}$ and performs local updates using its dataset $\mathcal{D}_i$. This typically involves running multiple steps (epochs $E$) of an optimization algorithm like Stochastic Gradient Descent (SGD) on local data mini-batches to minimize $\mathcal{L}_i(w)$. Let $w_t^i$ denote the model parameters obtained by client $C_i$ after local training. Often, clients compute the model update (or delta) $\Delta w_t^i = w_t^i - w_{t-1}$.
    \item \textbf{Communication:} The participating clients transmit their updated local models $w_t^i$ (or alternatively, the updates $\Delta w_t^i$) back to the server.
    \item \textbf{Aggregation:} The server aggregates the received updates to form the new global model $w_t$. The standard aggregation rule is a weighted average based on the number of local data samples:
    \begin{equation}
        w_t = \sum_{i \in \mathcal{S}_t} \frac{n_i}{\sum_{j \in \mathcal{S}_t} n_j} w_t^i.
    \end{equation}
    (If clients send updates $\Delta w_t^i$, the server computes $w_t = w_{t-1} + \sum_{i \in \mathcal{S}_t} \frac{n_i}{\sum_{j \in \mathcal{S}_t} n_j} \Delta w_t^i$.)
\end{enumerate}
This iterative process repeats for a predetermined number of communication rounds $T$, or until a convergence criterion is met. While keeping data local enhances privacy, FL is still vulnerable to attacks targeting the communicated updates or the central server; hence robust security mechanisms, including authentication and integrity checks addressed in this work, are essential.

\subsection{Post-Quantum Cryptography (PQC)}
Post-quantum cryptography encompasses cryptographic algorithms engineered to withstand attacks mounted by both classical and powerful quantum computers. Shor's algorithm \cite{shor_polynomial-time_1999} demonstrated that quantum computers could efficiently break widely used public-key cryptosystems like RSA and ECC, which rely on the difficulty of factoring large integers and computing discrete logarithms, respectively. PQC aims to build cryptographic primitives based on mathematical problems believed to remain hard even for quantum algorithms. Digital signatures are particularly critical for ensuring authenticity and integrity in distributed systems like FL.

\subsubsection{Digital Signatures}
A digital signature scheme $\Pi$ is a tuple of three polynomial-time algorithms:
\begin{itemize}
    \item $\texttt{KeyGen}(1^\lambda) \rightarrow (pk, sk)$: On input of a security parameter $\lambda$, this algorithm generates a public key $pk$ and a corresponding secret key $sk$. The public key $pk$ can be widely distributed, while the secret key $sk$ must be kept private by the signer.
    \item $\texttt{Sign}(sk, m) \rightarrow \sigma$: Given the secret key $sk$ and a message $m$, this algorithm produces a digital signature $\sigma$.
    \item $\texttt{Verify}(pk, m, \sigma) \rightarrow \{0, 1\}$ (or $\{\text{invalid}, \text{valid}\}$): Given the public key $pk$, the message $m$, and a signature $\sigma$, this algorithm outputs 1 if $\sigma$ is a valid signature on $m$ under $pk$, and 0 otherwise.
\end{itemize}
The fundamental security requirement for a digital signature scheme is \textit{existential unforgeability under a chosen message attack (EUF-CMA)}. Informally, this means that even an adversary $\mathcal{A}$ who can adaptively request signatures on messages of their choice ($m_1, m_2, \dots$) using the legitimate signer's secret key (via a signing oracle) cannot produce a valid signature $\sigma^*$ on a \emph{new} message $m^*$ (i.e., $m^* \notin \{m_1, m_2, \dots\}$) that verifies correctly under the public key $pk$. A PQC signature scheme must maintain EUF-CMA security against adversaries equipped with quantum computational power.

\subsubsection{Lattice-Based Cryptography}
A significant family of PQC candidates is based on the presumed computational hardness of problems defined over mathematical lattices. A lattice $\mathcal{L}$ is a discrete subgroup of $\mathbb{R}^n$, typically represented as the set of all integer linear combinations of a basis $B = \{b_1, \ldots, b_k\} \subset \mathbb{R}^n$:
$$ \mathcal{L}(B) = \left\{ \sum_{i=1}^k z_i b_i \mid z_i \in \mathbb{Z} \right\}. $$
Prominent hard problems on lattices include:
\begin{itemize}
    \item \textbf{Shortest Vector Problem (SVP):} Find the shortest non-zero vector $v \in \mathcal{L}$ with respect to a given norm (e.g., Euclidean norm).
    \item \textbf{Short Integer Solution (SIS):} Given a matrix $A \in \mathbb{Z}_q^{k \times \ell}$ over a finite field $\mathbb{Z}_q$, find a non-zero integer vector $z \in \mathbb{Z}^\ell$ with small norm (e.g., $\|z\|_\infty \leq \beta$) such that $Az = 0 \pmod{q}$.
    \item \textbf{Learning With Errors (LWE):} Introduced by Regev \cite{regev_lattices_2009}, LWE is a cornerstone problem. Given access to samples of the form $(a_i, b_i = \langle a_i, s \rangle + e_i \pmod q)$, where $s \in \mathbb{Z}_q^n$ is a secret vector, $a_i \in \mathbb{Z}_q^n$ are chosen uniformly at random, and $e_i \in \mathbb{Z}_q$ are small "errors" drawn from a specific distribution (e.g., discrete Gaussian), the goal is to recover the secret $s$. Variants like Ring-LWE (RLWE) and Module-LWE (MLWE) operate over polynomial rings, offering better efficiency.
\end{itemize}
These lattice problems are widely believed to be computationally intractable for both classical and quantum computers for appropriate parameter choices, making them attractive foundations for PQC.

\subsubsection{ML-DSA (Dilithium)}
The Module-LWE Digital Signature Algorithm (ML-DSA), standardized by NIST as FIPS 204 and formerly known as Dilithium \cite{ducas_crystals-dilithium_2018, lyubashevsky_crystals-dilithium_2020}, is a prominent lattice-based signature scheme selected for standardization. Its security relies on the hardness of the MLWE and Module-SIS (MSIS) problems over polynomial rings $R_q = \mathbb{Z}_q[X]/(X^n+1)$, where $X^n+1$ is often a cyclotomic polynomial. Key characteristics include:
\begin{itemize}
    \item \textbf{Security:} Based on the well-studied hardness assumptions of MLWE and MSIS, offering strong guarantees against known classical and quantum attacks.
    \item \textbf{Efficiency:} Provides competitive performance, particularly fast signing and verification times compared to many other PQC signature candidates.
    \item \textbf{Compactness:} While signatures and keys are larger than classical counterparts like ECDSA, they are generally considered practical for a wide range of applications.
\end{itemize}
ML-DSA employs the Fiat-Shamir transformation with Aborts. This technique converts an interactive identification protocol (where a prover convinces a verifier of their identity) into a non-interactive signature scheme using a cryptographic hash function to simulate the verifier's challenges. The "abort" mechanism involves rejection sampling during signature generation: intermediate values derived from the secret key are checked, and if they fall outside a predetermined "safe" range (e.g., their norm is too large, potentially leaking key information), the signing process is aborted and restarted with fresh randomness. This ensures that the distribution of output signatures does not reveal statistical information about the secret key. ML-DSA-65 corresponds to NIST Security Level 2, aiming for security comparable to or greater than 128-bit symmetric key algorithms against quantum attackers.

\subsection{Blockchain Integration}
Blockchain technology provides a decentralized, distributed, and often immutable ledger. Transactions are grouped into blocks, cryptographically linked using hash pointers, and validated by a network consensus mechanism, ensuring consistency and tamper-resistance once blocks are finalized. In the context of FL, blockchain can enhance transparency, auditability, and trust by providing a shared, verifiable record of operations.

\subsubsection{Smart Contracts}
Smart contracts are self-executing contracts with the terms of the agreement directly written into code. They run on the blockchain network and are executed automatically and deterministically by the network nodes when predefined conditions are met (typically triggered by a transaction). In PQS-BFL, smart contracts deployed on the blockchain orchestrate the verification process:
\begin{itemize}
    \item Store immutable records, such as the registered PQC public keys ($pk_i$) associated with each participating client $C_i$.
    \item Define functions, like $\texttt{submitUpdate}(h_t^i, \sigma_t^i)$, which accept transaction inputs (e.g., the hash of a model update $h_t^i$ and its signature $\sigma_t^i$).
    \item Execute predefined logic, crucially including the $\texttt{Verify}(pk_i, h_t^i, \sigma_t^i)$ algorithm of the PQC scheme, using the submitted data and the stored public key $pk_i$.
    \item Update the blockchain state or emit events based on the verification outcome (e.g., recording verified hashes or logging verification success/failure).
\end{itemize}

\subsubsection{Gas Costs}
On many public blockchain platforms, particularly Ethereum and its derivatives, executing transactions and smart contract operations requires computational resources from the network nodes. This computational effort is quantified using a unit called "gas". Each low-level operation (opcode) within the smart contract execution has a specific gas cost. Users initiating transactions must pay a fee, typically calculated as the total gas consumed by the transaction multiplied by a gas price they specify (e.g., in Gwei). Implementing complex cryptographic operations like PQC verification directly within a smart contract can consume significant amounts of gas due to the computational complexity and potentially large data handling involved. Therefore, minimizing gas costs through efficient contract design and careful consideration of on-chain versus off-chain computation is crucial for economic viability \cite{albert_gasol_2020}.

\subsection{Integration Challenges}
Combining FL, PQC, and Blockchain presents specific challenges. The larger signature and key sizes inherent in many PQC schemes (compared to ECDSA) increase communication bandwidth requirements in FL and raise the storage and transaction data costs on the blockchain. Furthermore, the computational complexity of PQC verification algorithms, while fast in absolute terms on standard hardware, can translate into high gas costs within the constrained execution environment of smart contracts. PQS-BFL is designed to navigate these trade-offs, ensuring quantum-resistant security while maintaining practical performance.

\section{System Design and Methodology}
\label{sec:design}

This section details the architecture, protocol specification, and security analysis of the PQS-BFL framework.

\subsection{System Architecture}
PQS-BFL integrates Federated Learning (FL), Post-Quantum Cryptography (PQC), and Blockchain technology into a cohesive system. The architecture is designed modularly to separate concerns and facilitate analysis (see Figure~\ref{fig:system_architecture}).

\begin{figure}[!htbp]
    \centering
    \includegraphics[width=\linewidth]{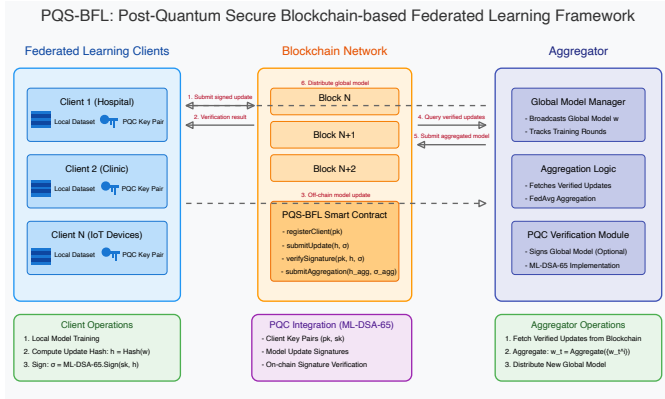}
    \caption{Overview of the PQS-BFL System Architecture. Clients perform local training, sign update hashes using PQC (ML-DSA-65), and submit them to a blockchain smart contract for verification and recording. An aggregator (which could be decentralized or a trusted server) fetches verified updates to compute the global model.}
    \label{fig:system_architecture}
\end{figure}

The core components are:
\begin{itemize}
    \item \textbf{FL Clients ($C_i$):} Entities (e.g., hospitals, devices) holding local data $\mathcal{D}_i$. They perform local model training, generate PQC key pairs $(pk_i, sk_i)$ using $\texttt{ML-DSA-65.KeyGen}$, compute hashes of their model updates $h_t^i = \texttt{Hash}(w_t^i)$, sign these hashes $\sigma_t^i = \texttt{ML-DSA-65.Sign}(sk_i, h_t^i)$, and submit $(h_t^i, \sigma_t^i)$ to the blockchain.
    \item \textbf{PQC Module:} Implements the ML-DSA-65 signature scheme ($\texttt{KeyGen}$, $\texttt{Sign}$, $\texttt{Verify}$). In our implementation, this utilizes liboqs.
    \item \textbf{Blockchain Network:} A distributed ledger (e.g., Ethereum) hosting the PQS-BFL smart contract.
    \item \textbf{Smart Contract ($\mathcal{SC}$):} Resides on the blockchain. Its functions include:
        \begin{itemize}
            \item $\texttt{registerClient}(pk_i)$: Stores the PQC public key $pk_i$ for a client.
            \item $\texttt{submitUpdate}(h_t^i, \sigma_t^i)$: Receives an update hash and signature from client $i$. It retrieves $pk_i$ and calls $\texttt{ML-DSA-65.Verify}(pk_i, h_t^i, \sigma_t^i)$. If valid, it records the hash $h_t^i$ (or emits an event).
            \item (Optional) $\texttt{submitAggregation}(h_t^{\text{agg}}, \sigma_t^{\text{agg}})$: Allows an authorized aggregator to submit the hash of the aggregated global model, signed with an aggregator key $sk_{\text{agg}}$.
        \end{itemize}
    \item \textbf{Aggregator:} An entity (potentially decentralized or a designated server) responsible for fetching verified update hashes $\{h_t^i\}$ from the blockchain (or corresponding full updates communicated off-chain), computing the aggregated global model $w_t$, and potentially distributing it for the next round.
\end{itemize}

\subsection{Protocol Specification}
The PQS-BFL protocol operates in rounds. Algorithm~\ref{alg:pqs_bfl_protocol} outlines the core steps.

\begin{algorithm}[!ht]
\caption{PQS-BFL Protocol}\label{alg:pqs_bfl_protocol}
\begin{algorithmic}[1]
\Require Global model $w_{t-1}$, Client set $\mathcal{C}$, Datasets $\{\mathcal{D}_i\}$, PQC scheme $\Pi = (\texttt{KeyGen}, \texttt{Sign}, \texttt{Verify})$, Hash function $\texttt{Hash}$, Smart Contract $\mathcal{SC}$
\Ensure Updated global model $w_t$, Blockchain record of verified updates

\Statex \textbf{Initialization Phase (Once per client):}
\For{each client $C_i \in \mathcal{C}$}
    \State $(pk_i, sk_i) \gets \Pi.\texttt{KeyGen}(1^\lambda)$
    \State Send transaction to $\mathcal{SC}.\texttt{registerClient}(pk_i)$ associated with client $i$'s blockchain address.
\EndFor

\Statex \textbf{Federated Round $t$:}
\State \textbf{Server/Aggregator:} Broadcasts $w_{t-1}$ to participating clients.
\For{each participating client $C_i$} \Comment{Client-side}
    \State $w_t^i \gets \texttt{LocalTraining}(w_{t-1}, \mathcal{D}_i)$ \Comment{Train locally}
    \State $h_t^i \gets \texttt{Hash}(w_t^i)$ \Comment{Compute hash of update/model}
    \State $\sigma_t^i \gets \Pi.\texttt{Sign}(sk_i, h_t^i)$ \Comment{Sign hash using PQC}
    \State Send transaction to $\mathcal{SC}.\texttt{submitUpdate}(h_t^i, \sigma_t^i)$
    \State (Optional: Communicate full update $w_t^i$ off-chain to aggregator, linked to $h_t^i$)
\EndFor
\State \textbf{Server/Aggregator:}
\State Wait for a sufficient number of updates to be verified on the blockchain.
\State Fetch verified hashes $\{h_t^i\}_{i \in \mathcal{V}_t}$ from $\mathcal{SC}$, where $\mathcal{V}_t$ is the set of verified clients in round $t$.
\State Obtain corresponding full updates $\{w_t^i\}_{i \in \mathcal{V}_t}$ (e.g., via off-chain channel).
\State $w_t \gets \texttt{Aggregate}(\{w_t^i\}_{i \in \mathcal{V}_t})$ \Comment{e.g., FedAvg}
\State (Optional: $h_t^{\text{agg}} \gets \texttt{Hash}(w_t)$; $\sigma_t^{\text{agg}} \gets \Pi.\texttt{Sign}(sk_{\text{agg}}, h_t^{\text{agg}})$)
\State (Optional: Send transaction to $\mathcal{SC}.\texttt{submitAggregation}(h_t^{\text{agg}}, \sigma_t^{\text{agg}})$)
\State \Return $w_t$
\end{algorithmic}
\end{algorithm}

\textbf{Key Design Choices:}
\begin{itemize}
    \item \textbf{Hashing Updates:} Clients sign the \textit{hash} of their model update rather than the full update. This significantly reduces the data that needs to be signed and potentially stored/processed on the blockchain, making it more efficient, especially for large models. A secure hash function (e.g., SHA3-256) is assumed.
    \item \textbf{On-Chain vs. Off-Chain Data:} Only the update hash $h_t^i$ and signature $\sigma_t^i$ are submitted on-chain for verification. The potentially large model update $w_t^i$ can be communicated off-chain directly to the aggregator, linked via the verified hash. This minimizes blockchain storage and gas costs.
    \item \textbf{Aggregator Role:} The aggregator role can be centralized or decentralized. In our experiments, we use a central aggregator for simplicity, but the blockchain verification mechanism supports decentralized aggregation designs.
\end{itemize}

Figure~\ref{fig:sequence_diagram} illustrates the complete sequence of operations and message exchanges in the PQS-BFL protocol, highlighting the interactions between clients, smart contracts, blockchain network, and the aggregator.

\begin{figure}[!ht]
    \centering
    \includegraphics[width=\linewidth]{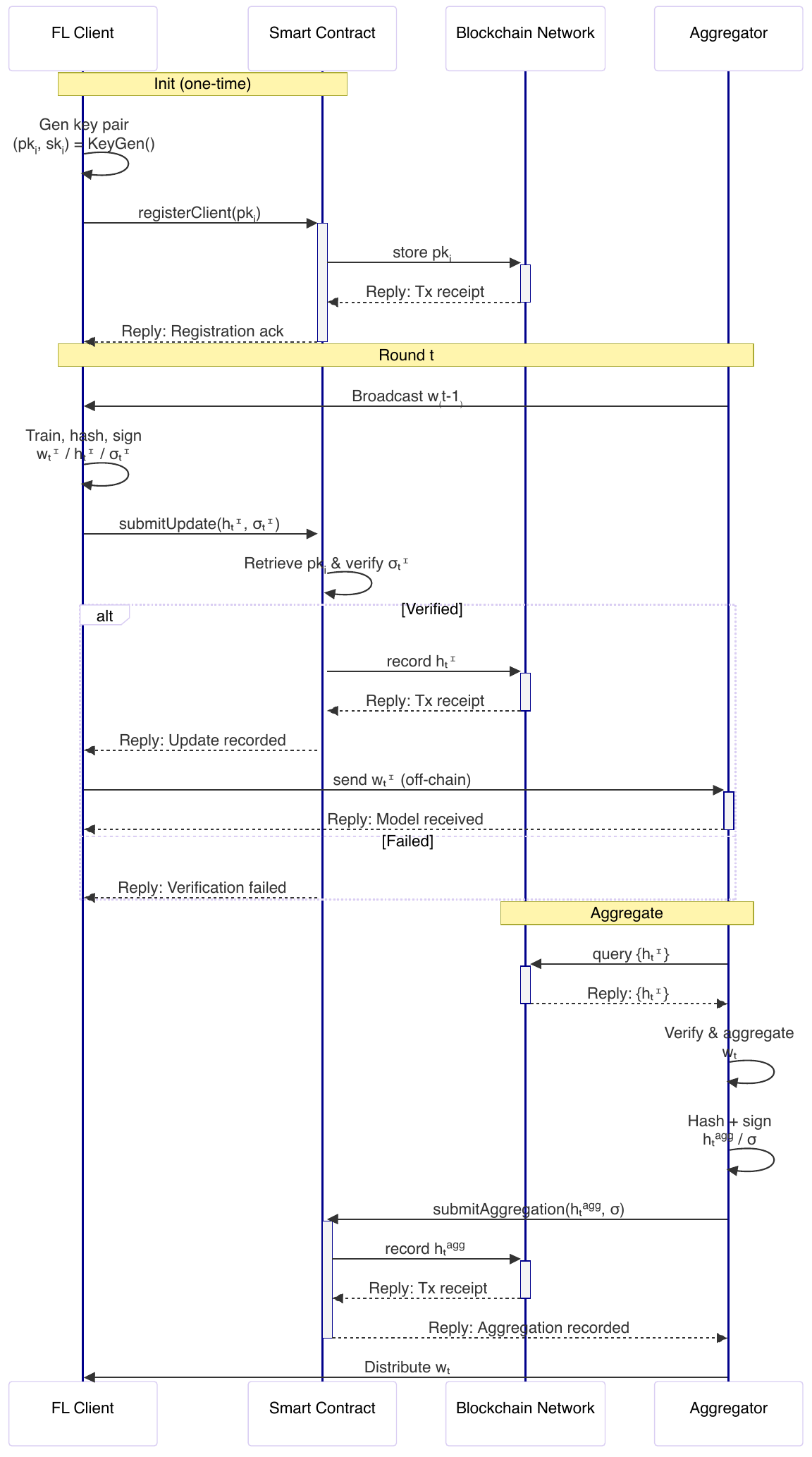}
    \caption{Sequence diagram of the PQS-BFL protocol, showing interactions between FL clients, smart contract, blockchain network, and aggregator. The diagram illustrates the end-to-end process from initialization to model distribution, highlighting the use of ML-DSA-65 signatures for update verification.}
    \label{fig:sequence_diagram}
\end{figure}

\subsection{Security Analysis}

\subsubsection{Threat Model}
We consider a powerful adversary $\mathcal{A}$ equipped with quantum computational capabilities (modeled as a Quantum Polynomial-Time (QPT) adversary). The adversary's goal is to compromise the integrity or authenticity of the federated learning process mediated by the blockchain. We assume $\mathcal{A}$ can:
\begin{itemize}
    \item Passively observe all public communication, including all transactions broadcast on the blockchain network and any off-chain communication it can intercept.
    \item Actively inject, modify, or drop messages in the network, subject to the guarantees provided by the underlying network and blockchain protocols.
    \item Adaptively corrupt a subset of FL clients, thereby obtaining their PQC secret keys $sk_i$ and potentially influencing their computations.
    \item Interact with the PQS-BFL smart contract $\mathcal{SC}$ by sending transactions (e.g., attempting to register keys or submit updates).
    \item Attempt to forge PQC signatures on model update hashes ($h_t^i$) for uncorrupted clients.
    \item Attempt to manipulate the aggregation process, for example, by trying to inject malicious updates or influence which updates are considered valid.
\end{itemize}
We detail the specific security assumptions regarding the cryptographic primitives and the blockchain platform below.

\subsubsection{Security Goals}
The primary security objectives of PQS-BFL against the described adversary are:
\begin{enumerate}
    \item \textbf{Update Authenticity:} Ensure that only the client $C_i$ possessing the secret key $sk_i$ corresponding to the registered public key $pk_i$ can produce a valid signature $\sigma_t^i$ on an update hash $h_t^i$ that will be accepted as valid for $C_i$ by the smart contract $\mathcal{SC}$.
    \item \textbf{Update Integrity:} Guarantee that any modification to a model update $w_t^i$ after it has been hashed to $h_t^i$ will render the original signature $\sigma_t^i$ invalid for the modified content. Furthermore, ensure that any tampering with the hash $h_t^i$ or the signature $\sigma_t^i$ during transmission or prior to blockchain finality prevents successful verification by $\mathcal{SC}$.
    \item \textbf{Unforgeability:} Prevent the adversary $\mathcal{A}$ from generating a valid signature $\sigma^*$ for any \emph{new} update hash $h^*$ (i.e., one not previously signed by the legitimate client) under the public key $pk_i$ of an uncorrupted client $C_i$.
\end{enumerate}

\subsubsection{Security Arguments}
The security guarantees of PQS-BFL are predicated on the following standard cryptographic and system assumptions:

\begin{assumption}[PQC Signature Security]\label{assump:pqc}
The chosen Post-Quantum Cryptography digital signature scheme, $\Pi$ (specifically ML-DSA-65 in our implementation), provides existential unforgeability under a chosen message attack (EUF-CMA) against quantum polynomial-time (QPT) adversaries. ML-DSA-65 targets NIST PQC Security Level 2 \cite{alagic_status_2022}, implying resistance comparable to or exceeding AES-128 against quantum attacks. This security relies on the conjectured quantum hardness of the Module Learning With Errors (MLWE) and Module Short Integer Solution (MSIS) problems over specific polynomial rings \cite{ducas_crystals-dilithium_2018, lyubashevsky_crystals-dilithium_2020}.
\end{assumption}

\begin{assumption}[Cryptographic Hash Function Security]\label{assump:hash}
The cryptographic hash function $\texttt{Hash}$ (SHA3-256 in our implementation) used to compute $h_t^i = \texttt{Hash}(w_t^i)$ is collision-resistant and second-preimage resistant against QPT adversaries. Collision resistance means it is computationally infeasible to find two distinct inputs $w \neq w'$ such that $\texttt{Hash}(w) = \texttt{Hash}(w')$. Second-preimage resistance means that given an input $w$, it is computationally infeasible to find a different input $w' \neq w$ such that $\texttt{Hash}(w) = \texttt{Hash}(w')$.
\end{assumption}

\begin{assumption}[Blockchain Platform Security]\label{assump:blockchain}
The underlying blockchain network provides essential properties for recording and verifying FL updates:
\begin{enumerate}
    \item \textbf{Consensus and Finality:} The network nodes correctly execute a consensus protocol (e.g., Nakamoto consensus in Proof-of-Work, or a BFT variant in Proof-of-Stake) to agree on the order and validity of transactions. Once a transaction containing a verified update $(h_t^i, \sigma_t^i)$ is included in a block that achieves sufficient finality (e.g., probabilistic finality after $k$ confirmations in PoW, or deterministic finality in some PoS systems), it is considered immutable.
    \item \textbf{State Consistency:} Honest nodes maintain a consistent view of the blockchain state, including registered public keys and the record of verified update hashes.
    \item \textbf{Liveness (Partial):} Transactions submitted by honest clients are eventually included in finalized blocks, assuming appropriate transaction fees are paid.
    \item \textbf{Resistance:} The consensus mechanism is resistant to attacks by the QPT adversary $\mathcal{A}$ aiming to reverse finalized blocks or cause persistent forks (e.g., resistant to quantum-enhanced 51\% attacks up to a certain threshold).
\end{enumerate}
\textit{Crucially, we note that this assumption may be strong if the blockchain platform itself relies on classical cryptography (e.g., ECDSA for transaction signing or consensus mechanisms). In such cases, while PQS-BFL secures the FL updates with PQC, the underlying ledger's security could remain vulnerable to quantum attacks, representing a potential bottleneck in the overall system's quantum resistance. A fully quantum-resistant system would require the blockchain platform itself to employ PQC.}
\end{assumption}

Based on these assumptions, we argue the security of PQS-BFL through the following theorems:

\begin{theorem}[Update Authenticity and Integrity]
Under Assumptions \ref{assump:pqc}, \ref{assump:hash}, and \ref{assump:blockchain}, the PQS-BFL protocol ensures update authenticity and integrity. Specifically, if the smart contract $\mathcal{SC}$ records an update hash $h_t^i$ as valid for client $C_i$, then with overwhelming probability: (a) the corresponding signature $\sigma_t^i$ was generated using the secret key $sk_i$ associated with the registered public key $pk_i$, and (b) $h_t^i$ is the correctly computed hash of the model update $w_t^i$ produced by the holder of $sk_i$.
\end{theorem}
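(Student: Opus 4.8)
The plan is to prove the contrapositive by reduction: any QPT adversary $\mathcal{A}$ that, with non-negligible probability, causes $\mathcal{SC}$ to record a hash $h_t^i$ as valid for an \emph{uncorrupted} client $C_i$ while violating either conclusion (a) or (b) can be converted into an attacker against one of Assumptions \ref{assump:pqc}--\ref{assump:hash}. First I would fix the ``bad event'' $E$: in some round $t$, $\mathcal{SC}$ records $(h_t^i, \sigma_t^i)$ as valid for an uncorrupted $C_i$, yet (a) $\sigma_t^i$ was not produced by the holder of $sk_i$, or (b) $h_t^i$ is not the hash of the model update $w_t^i$ that the holder of $sk_i$ actually computed. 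By Assumption \ref{assump:blockchain} (state consistency, finality, and the immutability of the key registered via $\texttt{registerClient}$), the pair the contract verified is exactly the pair it recorded and the key used is the genuine $pk_i$; by the contract logic, ``recorded'' implies $\texttt{ML-DSA-65.Verify}(pk_i, h_t^i, \sigma_t^i)=1$. Hence $E$ reduces to the existence of a verifying pair under $pk_i$ that is ``illegitimate'' in sense (a) or (b), and I then split on which.

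For case (a), I would build a reduction $\mathcal{B}$ to the EUF-CMA game for $\Pi$. $\mathcal{B}$ receives a challenge public key, guesses the target client (correct with probability $1/|\mathcal{C}|$), plants the challenge key as that client's $pk_i$, and simulates the entire PQS-BFL execution for $\mathcal{A}$ — running the other honest clients, the hash function, the smart contract, and the consensus/blockchain layer honestly, and forwarding any on-chain transaction $\mathcal{A}$ submits. Every legitimate signing step of the target client is answered with $\mathcal{B}$'s signing oracle, and $\mathcal{B}$ records the set $Q_i$ of hashes it has queried. The simulation is perfect because $\mathcal{A}$'s view — transactions, emitted events, public keys, off-chain updates — depends on $sk_i$ only through signatures, all of which come from the oracle. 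When $\mathcal{A}$ triggers $E$ via case (a), the recorded $\sigma_t^i$ verifies under $pk_i$; since $C_i$ is uncorrupted, any hash it itself signed lies in $Q_i$ and would additionally satisfy (b), so the $E$-violating hash satisfies $h_t^i \notin Q_i$. Thus $(h_t^i, \sigma_t^i)$ is a valid EUF-CMA forgery, contradicting Assumption \ref{assump:pqc}.

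For case (b), we may assume $\sigma_t^i$ was genuinely produced by the holder of $sk_i$ on some $h=\texttt{Hash}(w)$ for the model $w$ that holder computed; by correctness of $\Pi$ and Assumption \ref{assump:blockchain}, the contract records $h$, and if $h_t^i\neq h$ the recorded signature would fail to verify, i.e., we are back in case (a). So $h_t^i=h=\texttt{Hash}(w)$, and a violation of (b) means some $w'\neq w$ is passed off as ``the'' update for $h_t^i$ with $\texttt{Hash}(w')=\texttt{Hash}(w)$ — a second preimage for $\texttt{Hash}$ (or, viewed across all rounds and clients, a collision), contradicting Assumption \ref{assump:hash}. A union bound over the polynomially many (round, client) pairs keeps the aggregate failure probability negligible, so $\Pr[E]$ is negligible, which is the ``overwhelming probability'' guarantee in the statement; I would also remark that the guarantee is vacuous for corrupted $C_i$ (the adversary is then itself the holder of $sk_i$) and that replayed-but-authentic submissions violate neither (a) nor (b), so no anti-replay mechanism is needed here.

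The step I expect to demand the most care is the faithfulness of the case-(a) simulation: one must verify that embedding the EUF-CMA challenge key perturbs no part of $\mathcal{A}$'s view — in particular the smart-contract state transitions, the (possibly only classically-secured) consensus messages permitted by Assumption \ref{assump:blockchain}, adversarially submitted transactions, and the off-chain update channel — and that the freshness condition $h_t^i\notin Q_i$ genuinely follows from $C_i$ being uncorrupted rather than being quietly assumed. The hash-collision portion is comparatively routine, the only subtlety there being to phrase the union bound so it absorbs collisions found \emph{across} different rounds or clients, not merely second preimages within a single submission.
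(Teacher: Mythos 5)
Your proposal is correct and follows essentially the same route as the paper's own argument: authenticity is reduced to the EUF-CMA security of ML-DSA-65 (Assumption~\ref{assump:pqc}), integrity to the collision/second-preimage resistance of the hash (Assumption~\ref{assump:hash}), with blockchain finality and state consistency (Assumption~\ref{assump:blockchain}) guaranteeing that the pair the contract verified is the pair it recorded. The only difference is one of rigor rather than substance: the paper gives an informal proof sketch that invokes the assumptions directly, whereas you carry out the explicit contrapositive reduction (target-client guessing, signing-oracle simulation, the freshness condition $h_t^i \notin Q_i$, and the union bound over rounds and clients) --- standard bookkeeping that the paper elides.
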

\begin{proof}[Proof Sketch]
\textbf{Authenticity:} The smart contract $\mathcal{SC}$ validates a submitted pair $(h_t^i, \sigma_t^i)$ by executing $\Pi.\texttt{Verify}(pk_i, h_t^i, \sigma_t^i)$ using the publicly registered key $pk_i$. By Assumption \ref{assump:pqc} (EUF-CMA security of $\Pi$), it is computationally infeasible for any QPT adversary $\mathcal{A}$ not possessing $sk_i$ to create a valid signature $\sigma_t^i$ on any hash $h_t^i$ under $pk_i$, except with negligible probability. Thus, a successful verification implies the signature originated from the holder of $sk_i$, establishing authenticity.

\textbf{Integrity:} Integrity stems from the binding properties of the hash function and the signature, combined with blockchain immutability.
1.  \textit{Binding Update to Hash:} Client $C_i$ computes $h_t^i = \texttt{Hash}(w_t^i)$. By Assumption \ref{assump:hash} (collision and second-preimage resistance), it is computationally infeasible for an adversary to find a different update $w' \neq w_t^i$ such that $\texttt{Hash}(w') = h_t^i$. Thus, $h_t^i$ uniquely represents $w_t^i$ for practical purposes. Any modification to $w_t^i$ before hashing would result in a different hash $h'$.
2.  \textit{Binding Hash to Client:} As established by authenticity, a valid signature $\sigma_t^i$ cryptographically binds the specific hash $h_t^i$ to the public key $pk_i$ (and thus the client $C_i$). Any modification to $h_t^i$ or $\sigma_t^i$ after signing would cause the verification $\Pi.\texttt{Verify}(pk_i, h_t^i, \sigma_t^i)$ to fail.
3.  \textit{Immutability:} By Assumption \ref{assump:blockchain}(a), once the transaction containing the verified $(h_t^i, \sigma_t^i)$ pair (or a record derived from it) achieves finality on the blockchain, it cannot be altered or removed by the adversary.

Therefore, a verified record on the blockchain guarantees that the hash $h_t^i$ corresponds authentically to the update $w_t^i$ signed by the legitimate client $C_i$, and that neither the update content (relative to its hash), the hash, nor the signature were successfully tampered with.
\end{proof}

\begin{theorem}[Resistance to Forgery]
Under Assumption \ref{assump:pqc}, the PQS-BFL protocol resists signature forgery. A QPT adversary $\mathcal{A}$, even after observing multiple valid update hashes and signatures $\{ (h_t^j, \sigma_t^j) \}_{j,t}$ and potentially obtaining signatures for chosen messages via corrupted clients, cannot produce a valid signature $\sigma^*$ for a \emph{new} update hash $h^*$ (i.e., $h^*$ was not previously signed by the uncorrupted client $C_i$) under the public key $pk_i$ of an uncorrupted client $C_i$, except with negligible probability.
\end{theorem}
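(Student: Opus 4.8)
The plan is to prove the statement by a black-box reduction to the EUF-CMA security of $\Pi$ guaranteed by Assumption~\ref{assump:pqc}. Suppose, for contradiction, that a QPT adversary $\mathcal{A}$ wins the PQS-BFL forgery game with non-negligible probability $\epsilon$ — that is, $\mathcal{A}$ outputs a pair $(h^*, \sigma^*)$ that $\mathcal{SC}.\texttt{submitUpdate}$ accepts for some uncorrupted client $C_i$, with $h^*$ never previously signed by $C_i$. From $\mathcal{A}$ I would build a QPT forger $\mathcal{B}$ against $\Pi$. On input the EUF-CMA challenge public key together with oracle access $\mathcal{O}_{\mathrm{sign}}(\cdot)$, $\mathcal{B}$ guesses uniformly an index $i^*$ among the (polynomially many, say $M$) clients, installs the challenge key as $pk_{i^*}$, and runs $\Pi.\texttt{KeyGen}$ itself for every other client. $\mathcal{B}$ then plays the roles of the smart contract $\mathcal{SC}$, the blockchain, and the aggregator toward $\mathcal{A}$: registrations are recorded faithfully (in particular $pk_{i^*}$ is never overwritten, which the state-consistency clause of Assumption~\ref{assump:blockchain} permits); every honest signing action of $C_{i^*}$ on a hash $h$ is answered by calling $\mathcal{O}_{\mathrm{sign}}(h)$ and logging $h$ in a set $Q$; signing for all other clients uses the locally generated secret keys; and an adaptive corruption of $C_j$ is answered with $sk_j$ if $j \ne i^*$ and causes an abort otherwise.

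When $\mathcal{A}$ terminates with its forgery $(h^*, \sigma^*)$ against client $C_i$, $\mathcal{B}$ aborts unless $i = i^*$ and otherwise submits $(h^*, \sigma^*)$ to its challenger. Two facts make this a valid EUF-CMA forgery. First, by the contract's definition, acceptance by $\mathcal{SC}.\texttt{submitUpdate}$ is exactly the event $\Pi.\texttt{Verify}(pk_{i^*}, h^*, \sigma^*) = 1$, so $\sigma^*$ verifies under the challenge key. Second, the winning condition says $h^*$ was never signed by the uncorrupted client $C_{i^*}$, hence never routed to $\mathcal{O}_{\mathrm{sign}}$, so $h^* \notin Q$ and $h^*$ is fresh in the EUF-CMA game. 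Because the challenge key is itself distributed as a fresh output of $\Pi.\texttt{KeyGen}$, the index $i^*$ is independent of $\mathcal{A}$'s view; conditioned on the guess being correct — an event of probability $1/M$ — the simulation is perfect, so $\mathcal{B}$ succeeds with probability at least $\epsilon/M$, contradicting Assumption~\ref{assump:pqc} since $\epsilon/M$ is non-negligible whenever $\epsilon$ is.

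The step I expect to require the most care is establishing that $\mathcal{B}$'s simulation is \emph{perfectly} faithful to the real PQS-BFL execution from $\mathcal{A}$'s viewpoint, so that nothing beyond the index guess erodes the advantage. This amounts to two sub-claims: (i) the only protocol use of any client's secret key is to sign update hashes, so $\mathcal{B}$ genuinely never needs $sk_{i^*}$ beyond what $\mathcal{O}_{\mathrm{sign}}$ provides; and (ii) $\mathcal{A}$'s ability to post arbitrary, malformed, or replayed transactions to $\mathcal{SC}$ — or to observe off-chain aggregator traffic — yields no information or power not already modeled by the EUF-CMA signing oracle and public key, since the contract's response to any $(h,\sigma)$ pair is a deterministic function of public data and $\Pi.\texttt{Verify}$. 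I would also note, for clarity, that unlike the Update Authenticity and Integrity theorem this statement never reasons about the model update $w_t^i$ behind a hash, so collision and second-preimage resistance of $\texttt{Hash}$ (Assumption~\ref{assump:hash}) play no role in this particular argument.
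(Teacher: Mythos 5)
Your proof is correct, and it reaches the same destination as the paper's --- a reduction to the EUF-CMA security of $\Pi$ from Assumption~\ref{assump:pqc} --- but you do substantially more work than the paper does. The paper's proof is a single paragraph asserting that the theorem "is precisely" the EUF-CMA game restated in the PQS-BFL setting, and concludes immediately; it never constructs a reduction, never addresses the fact that there are $M$ clients rather than one challenge key, and never discusses adaptive corruption or simulation fidelity. You, by contrast, carry out the full multi-user-to-single-user reduction: guessing the target index $i^*$, embedding the challenge key, routing $C_{i^*}$'s signatures through $\mathcal{O}_{\mathrm{sign}}$, aborting on corruption of $i^*$, and arguing the $1/M$ loss via independence of $i^*$ from $\mathcal{A}$'s view. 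This buys genuine rigor the paper lacks --- in particular, the paper's "direct consequence" claim silently assumes the single-key EUF-CMA definition transfers to a multi-client, adaptively-corrupting setting, which is exactly the gap your guessing argument closes. Your closing observations (that the secret key is used only for signing hashes, that $\mathcal{SC}$'s behavior is a deterministic function of public data and $\Pi.\texttt{Verify}$, and that Assumption~\ref{assump:hash} plays no role here) are all accurate and sharpen the argument further. If anything, your version is the proof the paper should have written.
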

\begin{proof}
This property is a direct consequence of the definition of Existential Unforgeability under a Chosen Message Attack (EUF-CMA) for the signature scheme $\Pi$. Assumption \ref{assump:pqc} explicitly states that $\Pi$ (ML-DSA-65) satisfies EUF-CMA security against QPT adversaries. The theorem's claim is precisely that the adversary cannot succeed in the EUF-CMA game against $\Pi$ using the PQS-BFL protocol interactions. Therefore, the probability of successful forgery by any QPT adversary $\mathcal{A}$ is negligible in the security parameter $\lambda$ used for generating the keys for $\Pi$.
\end{proof}

In summary, the security of PQS-BFL hinges on the quantum resistance of the ML-DSA signature scheme, the robustness of the SHA3-256 hash function, and the integrity guarantees provided by the underlying blockchain platform. While PQS-BFL significantly enhances the security of the FL update verification process against quantum threats, achieving end-to-end quantum resistance requires ensuring that the blockchain layer itself (Assumption \ref{assump:blockchain}) is also secured using PQC.

\section{Experimental Setup}
\label{sec:setup}

This section details the configuration and methodology employed to rigorously evaluate the performance and feasibility of the proposed PQS-BFL framework.

\subsection{Experimental Framework}
All experiments were conducted utilizing the PQS-BFL framework. PQS-BFL is an internally developed, Python-based platform designed specifically for the systematic benchmarking of secure, quantum-resistant federated learning approaches. It integrates several key libraries: PyTorch serves as the foundation for machine learning model implementation and training; the liboqs library (\url{https://github.com/open-quantum-safe/liboqs}), accessed via its Python bindings, provides the implementation for the post-quantum cryptographic scheme ML-DSA-65; the standard Python \texttt{cryptography} library is used for classical ECDSA operations (employing the SECP256k1 curve); and Web3.py facilitates interaction with the blockchain backend for transaction submission and data retrieval.

\subsection{Hardware and Software}
The computational experiments were performed on a desktop machine equipped with an Intel Core i5-11500 processor and \SI{32}{GB} of RAM, running the Ubuntu 24.04 LTS operating system. To ensure reproducible and controlled blockchain interactions without the variability of public testnets, the blockchain environment was simulated locally using Ganache (\url{https://trufflesuite.com/ganache/}). Ganache provides a personal Ethereum blockchain instance, allowing for consistent measurement of transaction times and gas costs in a controlled setting. The entire experimental framework was implemented using Python version 3.8.

\subsection{Experimental Configurations}
We designed the experiments to facilitate performance comparisons across several critical dimensions: the dataset used for training, the cryptographic signature scheme employed, the number of participating clients, and the integration of blockchain for verification. This resulted in a series of distinct configurations, summarized in Table~\ref{tab:crypto_comp} and Table~\ref{tab:overall_comp} (presented in Section~\ref{sec:results}). A consistent naming convention, \texttt{dataset-crypto-clients-blockchain}, is used to identify each experimental run. For example, \texttt{mnist-PQC-10c-BC} denotes an experiment using the MNIST dataset, ML-DSA-65 (PQC) signatures, with 10 participating clients, and utilizing on-chain blockchain verification (\texttt{BC}). Conversely, \texttt{NoBC} indicates a configuration where blockchain interaction is simulated with a fixed delay for comparison purposes. The cryptographic options include \texttt{PQC} (ML-DSA-65), \texttt{ECDSA} (SECP256k1), and \texttt{NONE} (a baseline using only SHA-256 hashing without digital signatures). Client counts tested were 3 (\texttt{3c}), 10 (\texttt{10c}), and 30 (\texttt{30c}).

\subsection{Datasets and Models}
To evaluate the framework's performance across different data types and model complexities, we utilized three distinct datasets:
\begin{itemize}
    \item \textbf{MNIST}: A widely used benchmark dataset consisting of 28$\times$28 pixel grayscale images of handwritten digits (10 classes). For this dataset, we employed a LeNet-style Convolutional Neural Network (CNN) with approximately \num{21800} trainable parameters. Data was distributed non-identically and non-independently (non-IID) among clients using a Dirichlet distribution with parameter $\alpha=0.5$.
    \item \textbf{SVHN}: The Street View House Numbers dataset, composed of 32$\times$32 pixel color images of house numbers (10 classes). A custom CNN architecture, larger than the one used for MNIST (approximately \num{237600} parameters), was trained on this dataset. A non-IID data split was also employed for SVHN, simulating heterogeneity across clients.
    \item \textbf{HAR}: The UCI Human Activity Recognition Using Smartphones dataset. This dataset involves classifying human activities based on sensor data, represented by 561 features (6 classes). A Multi-Layer Perceptron (MLP) model with approximately \num{42900} parameters was used. Data was distributed in a non-IID fashion among clients.
\end{itemize}

\subsection{Federated Learning Parameters}
The following parameters were consistently used for the Federated Learning process across all experiments. The core algorithm employed was Federated Averaging (FedAvg). Training proceeded for a total of 50 communication rounds. In each round, all clients designated for that specific experiment configuration participated in the training (i.e., no client sampling). Each participating client performed local training for $E=5$ epochs on its local dataset before submitting its update. The Adam optimizer was used for local model updates with a learning rate set to 0.001. A batch size of 64 was used during local training.

\subsection{Cryptographic and Blockchain Parameters}
The specific cryptographic configurations were as follows: Post-quantum cryptography was represented by ML-DSA-65, implemented using the liboqs library. The classical baseline was ECDSA using the SECP256k1 curve, with SHA-256 hashing for message digesting prior to signing, implemented via the Python \texttt{cryptography} library. For hashing model updates before signing (in PQC, ECDSA, and NONE configurations), the SHA3-256 algorithm was used. The blockchain backend was a local Ganache instance configured with default gas pricing and block gas limits. Smart contracts were written in Solidity version 0.8.x. Since precompiled contracts for ML-DSA-65 verification are not standard, gas costs for PQC verification on the blockchain were estimated based on operational complexity relative to known operations, providing a reasonable approximation for comparison. The 'NoBC' configuration simulated the absence of blockchain latency by introducing a fixed delay of \SI{0.05}{s} in place of the on-chain transaction submission and confirmation time.

\subsection{Evaluation Metrics}
We measured a comprehensive set of metrics to evaluate the different facets of the PQS-BFL framework's performance. Model performance was assessed using the test-set accuracy achieved by the global model after each aggregation round. Cryptographic performance was measured by timing the key generation (\texttt{KeyGen}), signing (\texttt{Sign}), and verification (\texttt{Verify}) operations using Python's \texttt{time.perf\_counter()} for high precision, with values averaged over multiple runs and rounds. The sizes of public keys, private keys, and generated signatures were recorded in bytes. Blockchain overhead was quantified by measuring the average gas used per transaction, obtained from the receipts generated by the Ganache instance. The relative cryptographic overhead was calculated as the ratio of the combined Sign and Verify times to the total transaction time (or the fixed \SI{0.05}{s} delay in NoBC scenarios). Additionally, we calculated blockchain efficiency metrics, including the number of FL rounds accommodated per blockchain transaction (relevant if batching were used, though 1:1 in our setup), the average total gas consumed per FL round (summing gas over all client updates in a round), and the accuracy gain achieved per unit of gas consumed, offering insight into the cost-effectiveness of learning progress. All time-based and cost-based metrics were averaged over the 50 FL rounds conducted for each experiment.

\section{Experimental Results and Discussion}
\label{sec:results}

This section presents and analyzes the empirical results obtained from executing the experimental configurations detailed in Section~\ref{sec:setup}. We examine cryptographic performance, its impact on federated learning, blockchain overheads, scalability, and cross-dataset behavior.

\subsection{Cryptographic Performance Comparison}
We first evaluate the fundamental performance characteristics of the cryptographic schemes themselves. Table~\ref{tab:crypto_comp} provides average timings for key generation, signing, and verification, along with the corresponding key and signature sizes for ML-DSA-65 (PQC), ECDSA (SECP256k1), and the baseline hashing approach (NONE). Figure~\ref{fig:crypto_perf} visualizes these comparisons for the MNIST dataset configuration.

\begin{table*}[ht]
\caption{Cryptographic Performance Comparison (Average Values)}
\label{tab:crypto_comp}
\centering
\footnotesize
\sisetup{round-mode=places,round-precision=3}
\begin{tabular}{lll
  S[round-precision=3]
  S[round-precision=3]
  S[round-precision=3]
  S[round-precision=0]
  S[round-precision=0]
  S[round-precision=0]}
\toprule
Configuration & Dataset & Crypto & {KeyGen (ms)} & {Sign (ms)} & {Verify (ms)} & {Sig Size (B)} & {PubKey (B)} & {PrivKey (B)} \\
\midrule
har-PQC-3c-BC     & HAR   & PQC   & 0.147 & 0.659 & 0.538 & 3309 & 1952 & 4032 \\
mnist-ECDSA-3c-BC & MNIST & ECDSA & 0.917 & 0.163 & 0.145 &   71 &  178 &  241 \\
mnist-NONE-3c-BC  & MNIST & NONE  & 0.000 & 0.010 & 0.004 &   32 &   26 &   27 \\
mnist-PQC-10c-BC  & MNIST & PQC   & 0.190 & 0.656 & 0.546 & 3309 & 1952 & 4032 \\
mnist-PQC-30c-BC  & MNIST & PQC   & 0.127 & 0.654 & 0.525 & 3309 & 1952 & 4032 \\
mnist-PQC-3c-BC   & MNIST & PQC   & 0.137 & 0.640 & 0.527 & 3309 & 1952 & 4032 \\
mnist-PQC-3c-NoBC & MNIST & PQC   & 0.150 & 0.609 & 0.524 & 3309 & 1952 & 4032 \\
svhn-PQC-3c-BC    & SVHN  & PQC   & 0.114 & 0.636 & 0.522 & 3309 & 1952 & 4032 \\
\bottomrule
\end{tabular}
\end{table*}

\begin{figure}[!htbp]
\centering
\includegraphics[width=\linewidth]{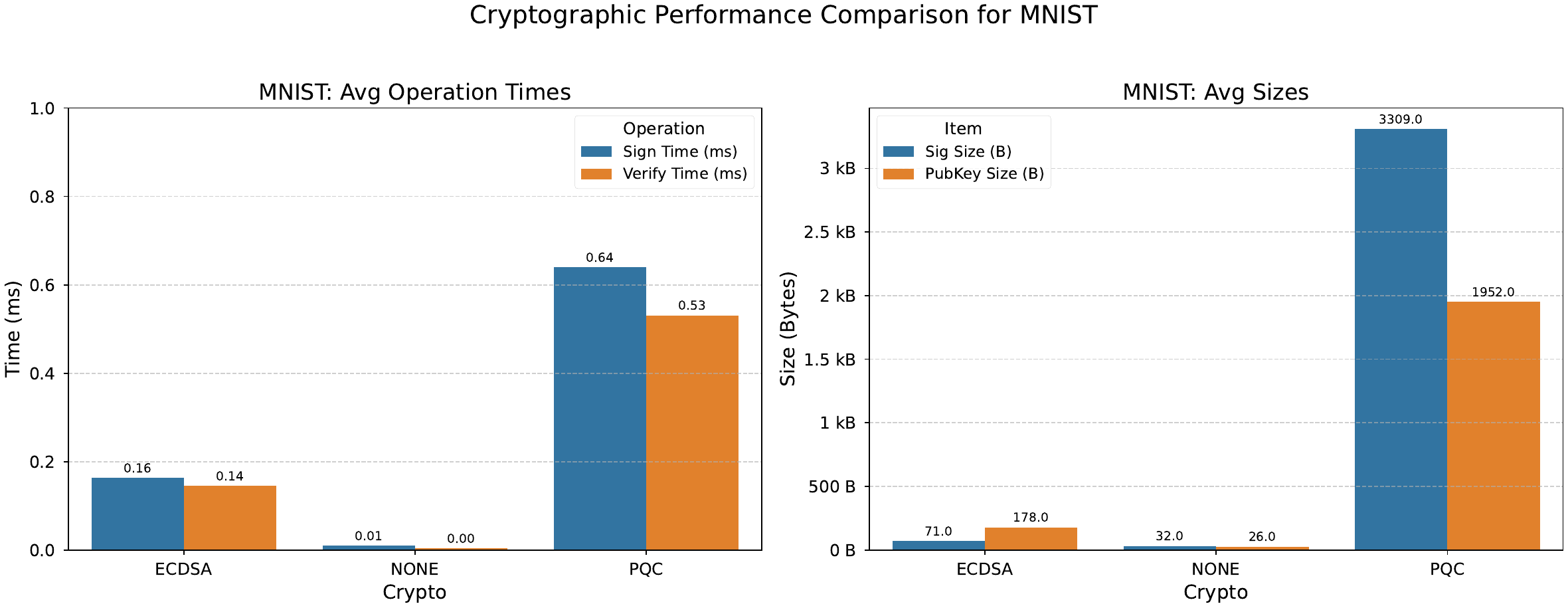}
\caption{Cryptographic performance comparison for the MNIST dataset configuration, illustrating the average operation times (left, logarithmic scale) and key/signature sizes (right) for ML-DSA-65 (PQC), ECDSA, and NONE (hashing only).}
\label{fig:crypto_perf}
\end{figure}

The results indicate that the core ML-DSA-65 operations are highly efficient, consistently achieving sub-millisecond timings. On average, PQC key generation takes approximately \SI{0.15}{ms}, signing requires around \SI{0.65}{ms}, and verification completes in about \SI{0.53}{ms}. Compared to classical ECDSA, PQC signing is roughly 4 times slower (\SI{0.65}{ms} vs. \SI{0.16}{ms}), and PQC verification is about 3.6 times slower (\SI{0.53}{ms} vs. \SI{0.15}{ms}). The NONE configuration, involving only hashing, is understandably the fastest. A significant difference lies in the data sizes: ML-DSA-65 signatures (\SI{3309}{Bytes}) and public/private keys (\SI{1952}{B}/\SI{4032}{B}) are substantially larger than their ECDSA counterparts (\SI{71}{B} signature, \SI{178}{B}/\SI{241}{B} keys), which has implications for communication and storage overhead.

\subsection{Impact on Federated Learning Performance}
We next assessed how the integration of different cryptographic schemes affects the overall federated learning process, particularly model accuracy and training time per round. Table~\ref{tab:overall_comp} presents these key FL performance metrics, while Figure~\ref{fig:training_convergence} shows the convergence behavior (accuracy and loss) for the MNIST dataset across the different cryptographic configurations.

\begin{table*}[!htbp]
\caption{Overall Performance Metrics}
\footnotesize{Average values across 50 rounds. 'Avg. Tx Time / Client (s)' and 'Avg. Gas Used / Client Tx' represent the average blockchain transaction confirmation time and gas consumption for a single client's update submission in each round, respectively. 'Round Time (s)' is the total wall-clock time for one complete federated learning round.}
\label{tab:overall_comp}
\centering
\sisetup{round-mode=places,round-precision=2,table-format=1.3}
\begin{tabular}{l
  S[round-precision=3]
  S
  S[round-precision=3]
  S[round-precision=0]
  S[round-precision=2] 
  S[scientific-notation=true, round-precision=4, table-format=1.4e1]
  S[round-precision=6]}
\toprule
\footnotesize{Configuration} & {\footnotesize{Accuracy (\%)}} & {\footnotesize{Round Time (s)}} & {\footnotesize{Sign (ms)}} & {\footnotesize{Sig Size (B)}} & {\footnotesize{Avg. Tx Time (s)}} & {\footnotesize{Avg. Gas Used}} & {\footnotesize{Overhead Ratio}} \\
\midrule
har-PQC-3c-BC     & 94.905 &  3.05 & 0.659 & 3309 & 0.33 & 1.7241e+06 & 0.001939 \\
mnist-ECDSA-3c-BC & 99.069 &  8.14 & 0.163 &   71 & 0.12 & 1.8890e+05 & 0.000380 \\
mnist-NONE-3c-BC  & 99.319 &  8.12 & 0.010 &   32 & 0.11 & 1.7365e+05 & 0.000031 \\
mnist-PQC-3c-NoBC & 99.409 &  6.88 & 0.609 & 3309 & {N/A} & {N/A}      & 0.023099 \\
mnist-PQC-3c-BC   & 99.189 &  9.01 & 0.640 & 3309 & 0.32 & 1.7241e+06 & 0.000811 \\
mnist-PQC-10c-BC  & 98.968 & 10.30 & 0.656 & 3309 & 0.05 & 5.1723e+05 & 0.000347 \\
mnist-PQC-30c-BC  & 98.818 & 14.85 & 0.654 & 3309 & 0.03 & 1.7241e+05 & 0.000167 \\
svhn-PQC-3c-BC    & 93.819 & 16.30 & 0.636 & 3309 & 0.37 & 1.7241e+06 & 0.001008 \\
\bottomrule
\end{tabular}
\end{table*}

\begin{figure}[!htbp]
\centering
\includegraphics[width=\linewidth]{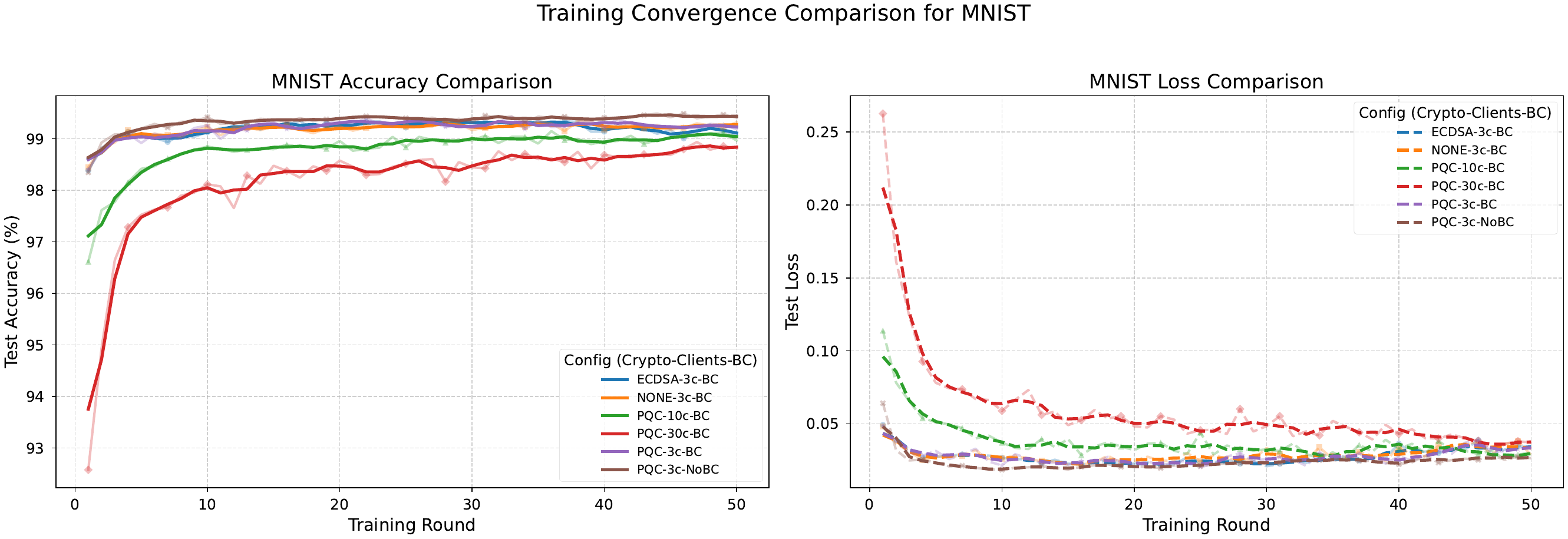}
\caption{Training convergence comparison for the MNIST dataset (3 clients, blockchain enabled). Left: Smoothed test accuracy over FL rounds. Right: Smoothed training loss over FL rounds. Curves show comparable learning performance across PQC, ECDSA, and NONE configurations.}
\label{fig:training_convergence}
\end{figure}

The results demonstrate that the choice of signature scheme has minimal impact on the final model accuracy. For instance, on MNIST with 3 clients, PQC achieved 99.19\% accuracy, ECDSA reached 99.07\%, and the NONE baseline attained 99.32\%, all highly comparable. The overall time required per federated learning round is primarily dictated by the complexity of the local training task (dataset size, model size, local epochs) rather than the cryptographic operations. For example, the simpler HAR task averaged \SI{3.05}{s} per round, MNIST averaged around \SI{8}{}--\SI{9}{s}, and the more complex SVHN task took \SI{16.30}{s} per round (for the 3-client PQC configuration). While PQC operations are slower than ECDSA, their contribution to the total round time is small, leading to only minor increases in overall round duration compared to ECDSA or NONE baselines.

\subsection{Blockchain Performance and Gas Usage}
Integrating verification into the blockchain introduces costs in terms of gas consumption and transaction confirmation time. Our experiments using Ganache provide estimates for these overheads. The average gas consumed for verifying an ML-DSA-65 signature within the smart contract was estimated at approximately \num{1.72e6} units. This is substantially higher, roughly 9 times more, than the gas required for ECDSA verification (around \num{1.89e5} units) or the baseline hash check (around \num{1.74e5} units), primarily due to the larger signature size and more complex verification arithmetic potentially involved in PQC. Consequently, the average transaction confirmation time for PQC updates (\SIrange{0.32}{0.37}{s} in our local setup for 3-client scenarios) was observed to be 2 to 3 times longer than for ECDSA (\SI{0.12}{s}) or NONE (\SI{0.11}{s}).

\subsection{Crypto Overhead Analysis}
A crucial aspect is understanding the overhead of the cryptographic operations (signing and verifying) relative to the time taken for the entire blockchain transaction. Figure~\ref{fig:crypto_overhead} plots this overhead ratio.

\begin{figure}[!htbp]
\centering
\includegraphics[width=0.9\linewidth]{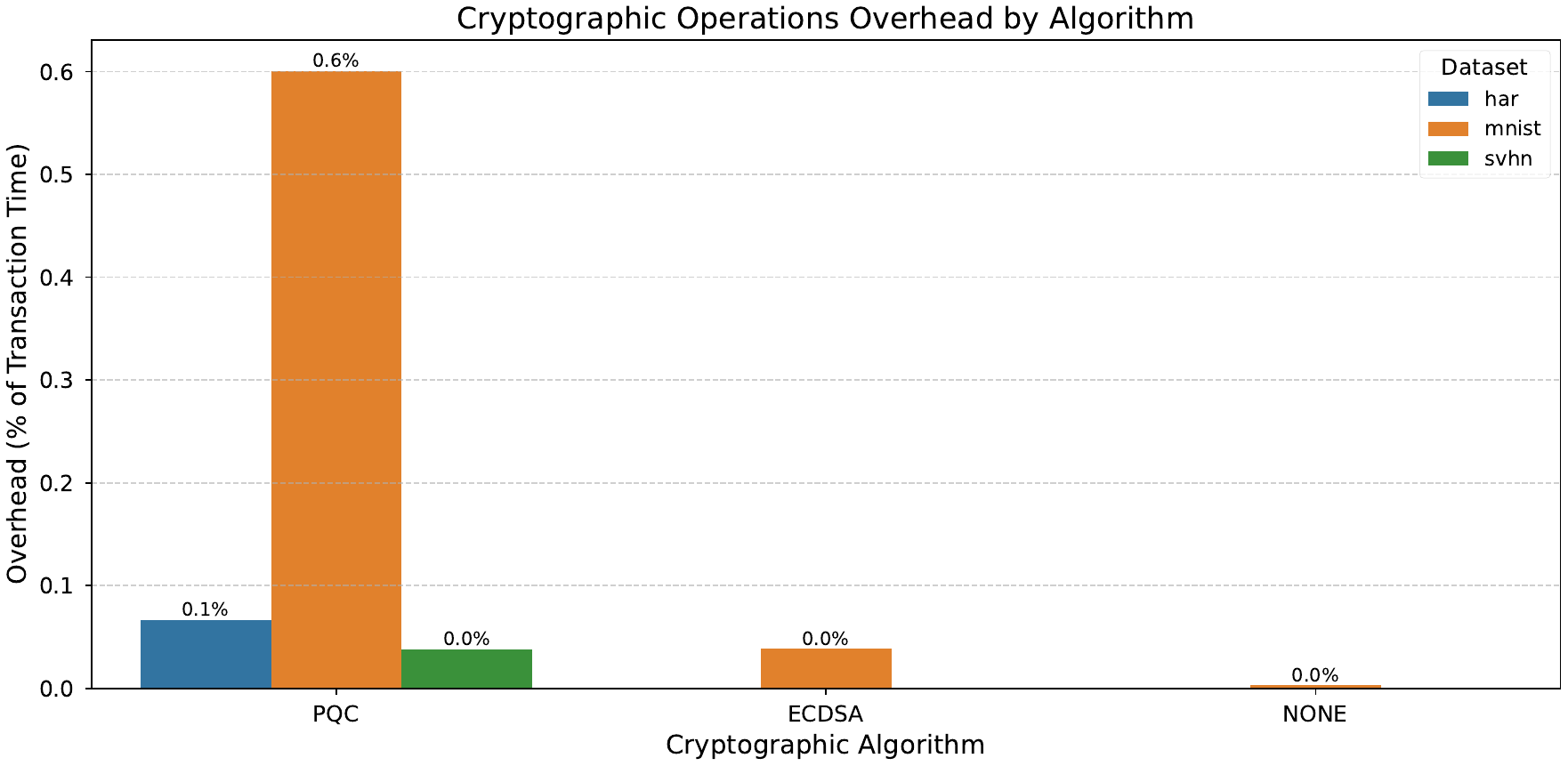}
\caption{Cryptographic overhead analysis. The bars show the combined time for signing and verifying as a percentage of the total blockchain transaction time (or fixed delay for NoBC). PQC overhead is negligible when blockchain latency is considered.}
\label{fig:crypto_overhead}
\end{figure}

The analysis reveals a key finding: when PQC signatures are verified on the blockchain (PQC+BC configurations), the time spent on the cryptographic operations themselves constitutes a very small fraction of the total transaction time. The overhead ratio consistently falls between 0.000167 (0.0167\%) and 0.001939 (0.19\%), rendering it negligible. This indicates that the blockchain network latency and consensus process dominate the transaction time, not the PQC computation. The ECDSA+BC overhead ratio is similarly low (0.000380 or 0.038\%). In contrast, the PQC+NoBC configuration, which replaces blockchain time with a fixed \SI{0.05}{s} delay, shows a much higher cryptographic overhead ratio (0.023099 or 2.3\%), highlighting that PQC performance is noticeable relative to very short delays but becomes insignificant compared to typical blockchain latencies.

\subsection{Blockchain Efficiency Metrics}
To further understand the cost-effectiveness, we examined metrics relating FL progress to blockchain resource usage. The average gas consumed per complete FL round (aggregating costs from all clients in that round) for PQC configurations (e.g., $\approx$ \SI{5.7e6}{} gas for MNIST-3c-BC) is significantly higher than for ECDSA or NONE configurations (e.g., $\approx$ \SI{9e5}{} gas for MNIST-3c-ECDSA/NONE). However, evaluating efficiency purely based on gas cost per round can be misleading. Metrics like accuracy gain per unit of gas consumed provide a more nuanced view, showing variations based on the dataset, client count, and stage of training. While PQC incurs higher gas costs per update verification, its impact on the overall cost-effectiveness of achieving a target accuracy level requires considering the entire training process and potential real-world transaction fees.

\subsection{Scalability Analysis}
We investigated how the system scales as the number of participating clients increases, focusing on the MNIST dataset with PQC signatures and blockchain verification. Figure~\ref{fig:scalability_round} shows the trend in average round time, and Figure~\ref{fig:scalability_tx} shows the average per-client transaction time.

\begin{figure}[!htbp]
\centering
\includegraphics[width=0.7\linewidth]{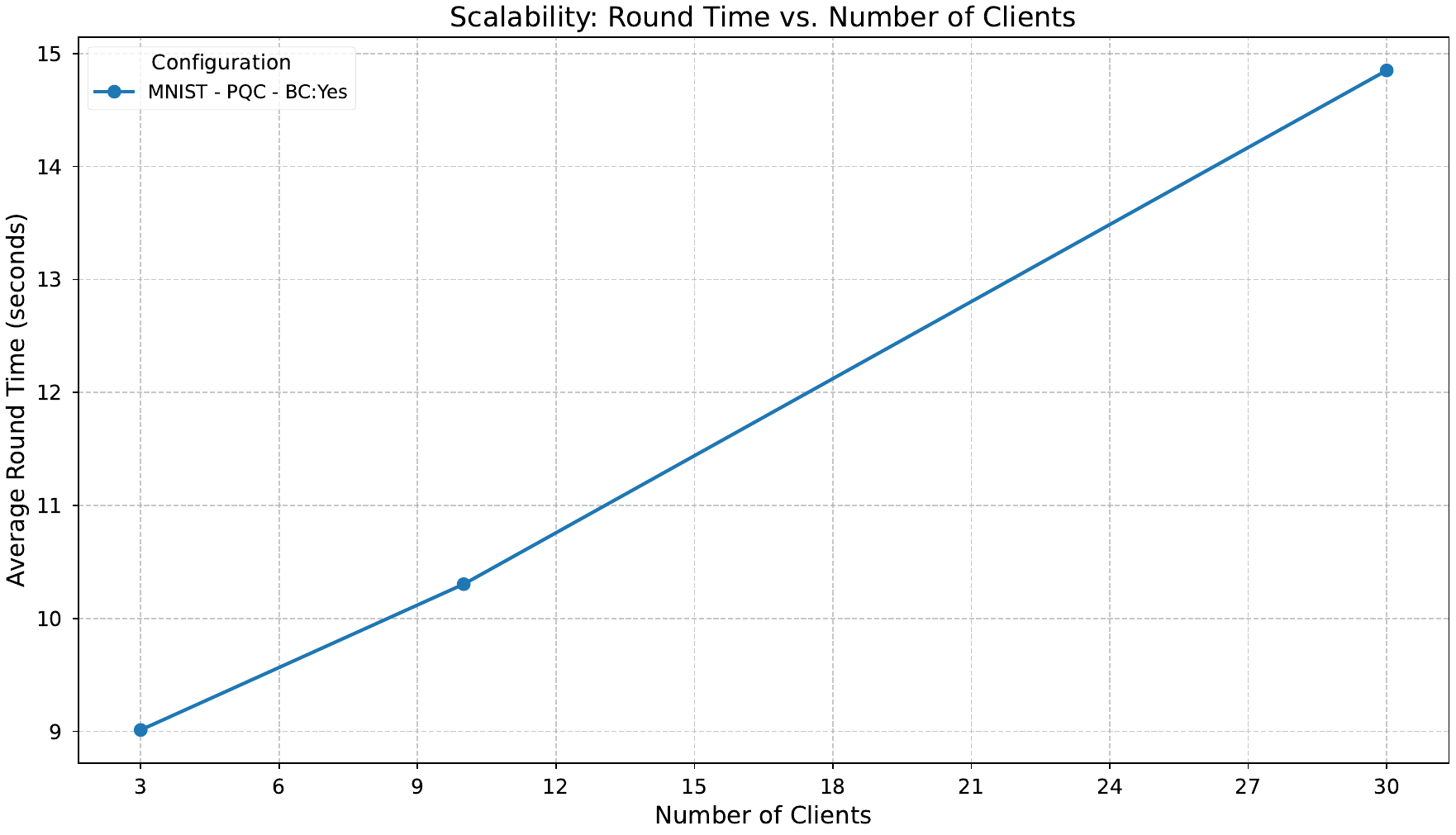}
\caption{Scalability of average round time with increasing number of clients for the MNIST PQC+BC configuration. The growth is sub-linear, indicating good scalability.}
\label{fig:scalability_round}
\end{figure}

\begin{figure}[!htbp]
\centering
\includegraphics[width=0.7\linewidth]{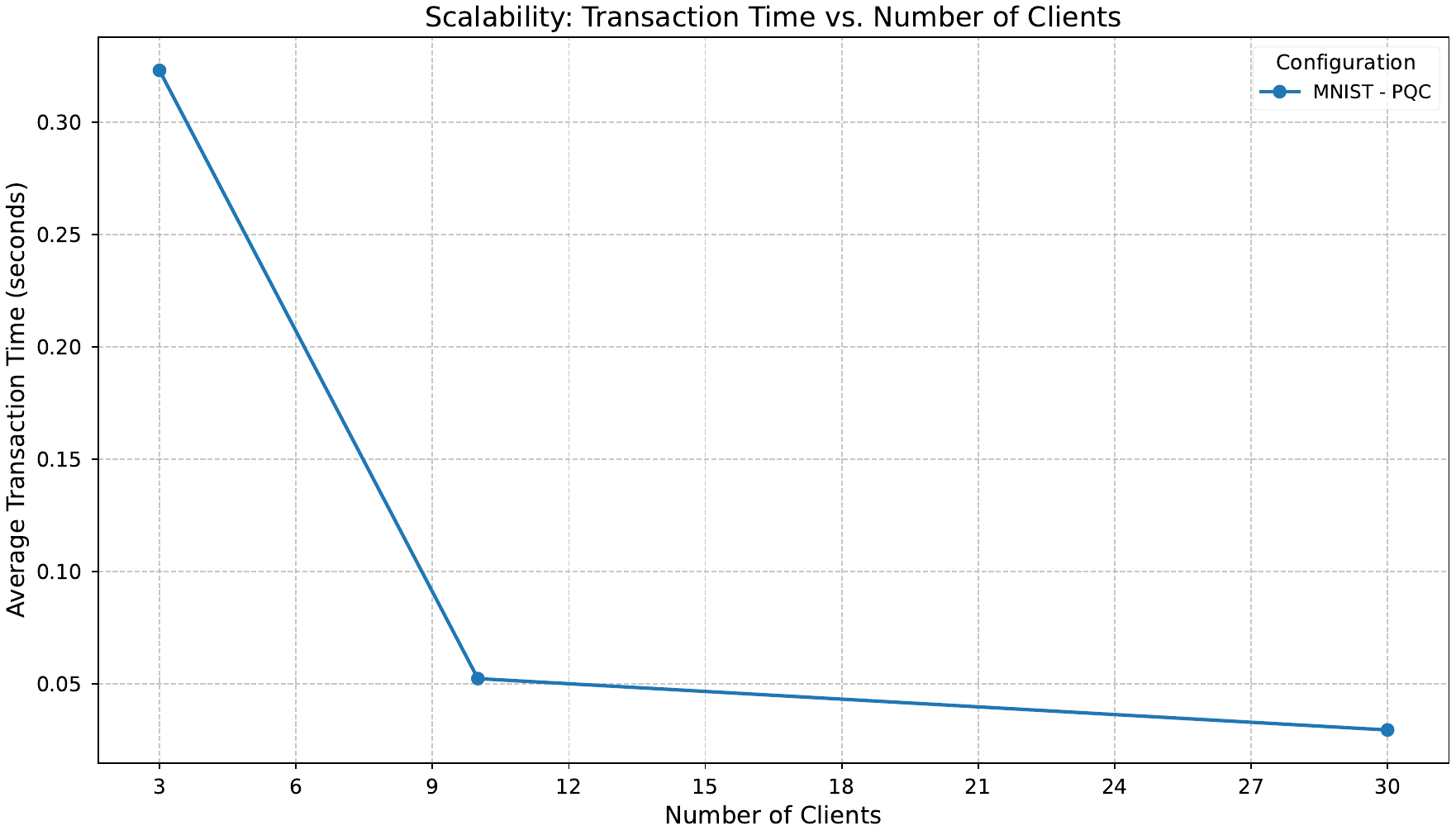}
\caption{Scalability of average per-client blockchain transaction time with increasing number of clients for the MNIST PQC+BC configuration. Transaction time remains relatively stable.}
\label{fig:scalability_tx}
\end{figure}

The results demonstrate favorable scalability. The average round time increased from \SI{9.01}{s} for 3 clients to \SI{10.30}{s} for 10 clients, and \SI{14.85}{s} for 30 clients. This growth is sub-linear, suggesting that the system can handle an increasing number of clients without prohibitive increases in round duration, at least within the tested range. The average per-client transaction time remained relatively stable or even decreased slightly (potentially due to averaging effects or Ganache behavior under load), indicating that individual blockchain submissions were not becoming significantly slower as client numbers increased.

\subsection{Cross-Dataset Comparison}
To confirm the consistency of PQC performance, we compared the results across the HAR, MNIST, and SVHN datasets for the 3-client, PQC-enabled, blockchain-integrated configuration. Figure~\ref{fig:cross_dataset} visualizes the training performance across these diverse datasets.

\begin{figure}[!htbp]
\centering
\includegraphics[width=\linewidth]{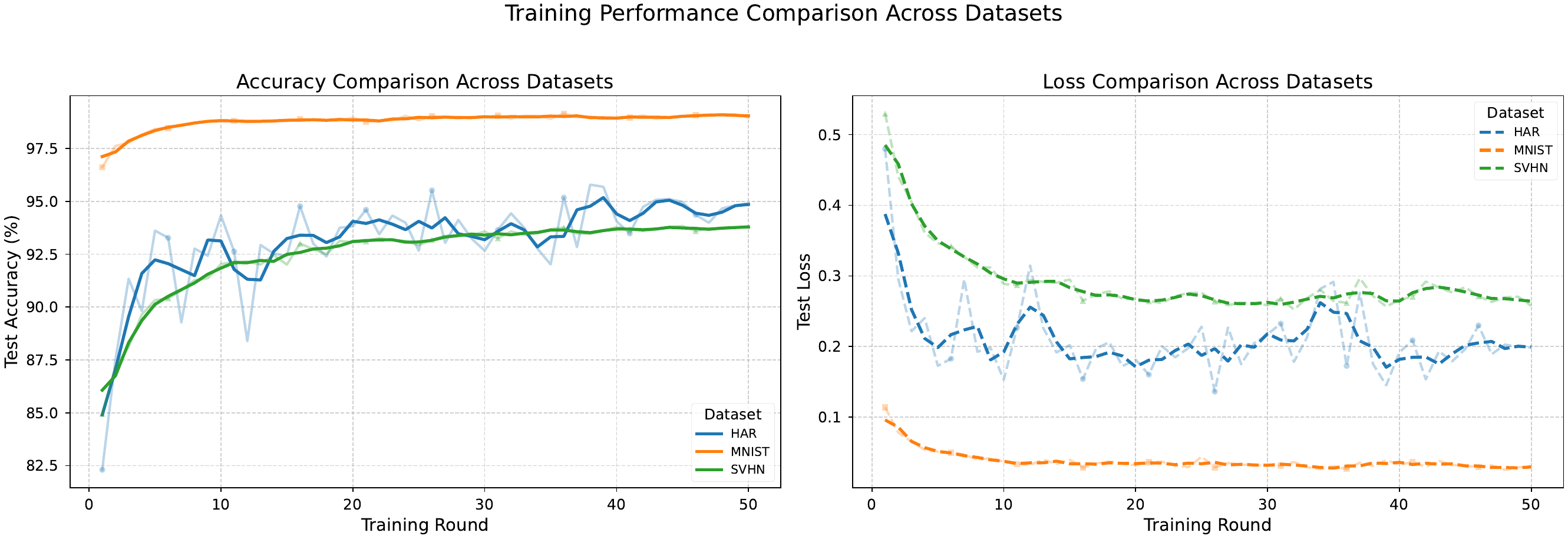}
\caption{Training performance comparison across HAR, MNIST, and SVHN datasets (PQC, 3 clients, blockchain enabled). Shows smoothed test accuracy (top) and loss (bottom), highlighting dataset-dependent learning curves but consistent PQC integration.}
\label{fig:cross_dataset}
\end{figure}

As expected, the core PQC cryptographic operation timings (Sign, Verify) remained consistent across datasets (see Table~\ref{tab:crypto_comp}), as these depend on the algorithm parameters, not the data being processed by the ML model. The differences in overall round times (\SI{3.05}{s} for HAR, \SI{9.01}{s} for MNIST, \SI{16.30}{s} for SVHN) and final model accuracy directly reflect the varying computational demands of the models and the inherent difficulty of the learning tasks associated with each dataset. Blockchain metrics like gas usage and transaction time were also primarily driven by the fixed-size PQC signature payload (\SI{3309}{B}) and associated verification complexity, rather than the dataset specifics.

\subsection{Impact of Blockchain Integration}
By comparing the MNIST PQC 3-client configuration with blockchain (\texttt{mnist-PQC-3c-BC}) and without (\texttt{mnist-PQC-3c-NoBC}, using a fixed delay), we can isolate the impact of the blockchain component. Integrating blockchain increased the average round time from \SI{6.88}{s} to \SI{9.01}{s}, representing an overhead of approximately 31\% attributed to transaction submission and confirmation in our simulated environment. There was a slight difference in final accuracy (99.41\% without BC vs. 99.19\% with BC), potentially due to timing variations affecting the training dynamics slightly. Most notably, the calculated cryptographic overhead ratio dropped dramatically from 2.31\% (NoBC) to 0.08\% (BC). This underscores that while blockchain adds latency, it makes the computational cost of PQC operations relatively insignificant within the end-to-end process.

\subsection{Discussion Summary}
The experimental results collectively demonstrate the practical feasibility of integrating ML-DSA-65 post-quantum signatures into a blockchain-based federated learning framework. While PQC introduces larger cryptographic artifacts and moderately slower signing/verification speeds compared to classical ECDSA (approximately 4x slower operations), the overall impact on the FL process is modest. Model accuracy is maintained, and the increase in round time is manageable, especially considering that local computation often dominates FL round durations. The primary trade-off lies in the increased gas costs for on-chain verification (roughly 9x higher than ECDSA in our estimations). However, our analysis critically shows that the computational overhead of PQC itself becomes negligible when placed within the context of blockchain transaction times. The system demonstrates good scalability with respect to the number of clients. These findings suggest that deploying PQS-BFL is a viable strategy for achieving long-term, quantum-resistant security in federated learning systems today, despite the higher blockchain resource consumption.

\section{Limitations and Future Work}
\label{sec:limitations}

While PQS-BFL demonstrates the feasibility of quantum-resistant blockchain-based FL, several limitations exist, opening avenues for future research.

\subsection{Limitations}
\begin{itemize}
    \item \textbf{Gas Costs:} The gas consumption for PQC update verification (avg. \num{1.7e6} per update) is significantly higher than for ECDSA. While manageable in our simulated environment, this could be a major cost barrier in large-scale deployments on public blockchains like Ethereum mainnet, especially with high gas prices.
    \item \textbf{Scalability Bottlenecks:} While round time scaled sub-linearly, potential bottlenecks could emerge at much larger scales (hundreds or thousands of clients) due to blockchain throughput limits (transactions per second), PQC verification load on nodes, or aggregator capacity.
    \item \textbf{Blockchain Storage:} Storing hashes or verification records for every client update across many rounds can lead to significant blockchain state growth over time, impacting node storage requirements and synchronization times.
    \item \textbf{Network Latency Simulation:} Using a local Ganache instance bypasses real-world network latency and variability, which could significantly impact actual transaction times and round durations.
    \item \textbf{PQC Verification Implementation:} Our smart contract simulation estimated PQC verification costs. A direct, gas-optimized Solidity implementation or standardized precompile for ML-DSA-65 verification would provide more accurate cost assessment and might differ from current estimates.
    \item \textbf{Simplified Aggregation:} The current framework assumes a straightforward aggregation process. More complex secure aggregation protocols, while enhancing privacy, could introduce additional cryptographic and communication overhead.
    \item \textbf{Limited PQC Schemes:} We focused solely on ML-DSA-65. Evaluating other NIST PQC signature finalists or candidates (e.g., Falcon, SPHINCS+) could reveal different performance trade-offs.
\end{itemize}

\subsection{Future Work}
Based on these limitations, future research directions include:
\begin{itemize}
    \item \textbf{Gas Optimization Techniques:} Implementing and evaluating advanced gas optimization strategies, such as batch verification (verifying multiple signatures in a single transaction), leveraging Layer-2 scaling solutions (e.g., rollups) to move verification off-chain while maintaining on-chain settlement, or exploring alternative blockchain platforms with lower transaction costs or built-in PQC support.
    \item \textbf{Optimized PQC Smart Contracts:} Developing highly optimized Solidity implementations for ML-DSA-65 verification or contributing to the standardization of PQC precompiles for major blockchain platforms.
    \item \textbf{Decentralized Aggregation and Verification:} Designing and evaluating protocols where aggregation and potentially even verification are performed by a decentralized network of nodes rather than a single aggregator, possibly using consensus mechanisms adapted for PQC.
    \item \textbf{Alternative PQC Signatures:} Benchmarking other quantum-resistant signature schemes (e.g., hash-based like SPHINCS+, or the more compact Falcon) within the PQS-BFL framework to compare performance trade-offs (e.g., signature size vs. computation time).
    \item \textbf{Integration with Privacy Techniques:} Combining PQS-BFL with privacy-enhancing technologies like differential privacy or secure aggregation protocols designed with PQC primitives to provide comprehensive security and privacy guarantees.
    \item \textbf{Real-World Network Evaluation:} Deploying and evaluating PQS-BFL on public blockchain testnets or private consortium blockchains to assess performance under realistic network conditions.
    \item \textbf{Adaptive Security Parameters:} Exploring mechanisms to dynamically adjust PQC security levels or parameters based on the specific requirements or risk profile of different FL rounds or client contributions.
\end{itemize}

\section{Conclusion}
\label{sec:conclusion}

The increasing sophistication of quantum computing necessitates a proactive transition towards quantum-resistant security measures in critical infrastructures, including collaborative machine learning systems like Federated Learning. This paper introduced PQS-BFL, a framework that addresses this challenge by integrating post-quantum digital signatures (ML-DSA-65) with blockchain-based verification for enhanced security and transparency in FL.

Our comprehensive experimental evaluation across diverse datasets (MNIST, SVHN, HAR) and system configurations demonstrates the practical feasibility of this approach. We showed that while ML-DSA-65 introduces larger cryptographic artifacts and moderately higher computational costs compared to classical ECDSA, its performance remains well within acceptable limits for practical deployment. Specifically, sub-millisecond signing and verification times were consistently observed. Crucially, when integrated with a blockchain, the overhead attributable to PQC operations becomes negligible (typically $<$0.2\%) compared to the overall transaction latency dominated by network and consensus mechanisms. This finding alleviates concerns that PQC might be computationally prohibitive for blockchain-based FL.

Furthermore, PQS-BFL maintains high model accuracy comparable to baselines without quantum-resistant security and exhibits favorable scalability characteristics, with round times growing sub-linearly as the number of clients increases. While PQC verification does incur higher gas costs on the blockchain compared to ECDSA, our analysis suggests this is a manageable trade-off for achieving long-term, quantum-resilient security.

PQS-BFL provides a robust and empirically validated blueprint for building next-generation federated learning systems capable of withstanding attacks from both classical and quantum adversaries. By releasing our framework and benchmarks, we aim to encourage further research and adoption of post-quantum security in decentralized AI applications, particularly in sensitive domains like healthcare where long-term data integrity and model authenticity are paramount. Future work will focus on optimizing blockchain costs, exploring alternative PQC schemes, and integrating enhanced privacy mechanisms to further strengthen the PQS-BFL framework.

\bibliographystyle{IEEEtran}
\bibliography{main}

\begin{thebibliography}{10}
\providecommand{\url}[1]{#1}
\csname url@samestyle\endcsname
\providecommand{\newblock}{\relax}
\providecommand{\bibinfo}[2]{#2}
\providecommand{\BIBentrySTDinterwordspacing}{\spaceskip=0pt\relax}
\providecommand{\BIBentryALTinterwordstretchfactor}{4}
\providecommand{\BIBentryALTinterwordspacing}{\spaceskip=\fontdimen2\font plus
\BIBentryALTinterwordstretchfactor\fontdimen3\font minus \fontdimen4\font\relax}
\providecommand{\BIBforeignlanguage}[2]{{%
\expandafter\ifx\csname l@#1\endcsname\relax
\typeout{** WARNING: IEEEtran.bst: No hyphenation pattern has been}%
\typeout{** loaded for the language `#1'. Using the pattern for}%
\typeout{** the default language instead.}%
\else
\language=\csname l@#1\endcsname
\fi
#2}}
\providecommand{\BIBdecl}{\relax}
\BIBdecl

\bibitem{mcmahan_communication-efficient_2017}
\BIBentryALTinterwordspacing
B.~McMahan, E.~Moore, D.~Ramage, S.~Hampson, and B.~A. y~Arcas, ``Communication-efficient learning of deep networks from decentralized data,'' in \emph{Artificial intelligence and statistics}.\hskip 1em plus 0.5em minus 0.4em\relax PMLR, 2017, pp. 1273--1282. [Online]. Available: \url{https://proceedings.mlr.press/v54/mcmahan17a?ref=https://githubhelp.com}
\BIBentrySTDinterwordspacing

\bibitem{rieke_future_2020}
\BIBentryALTinterwordspacing
N.~Rieke, J.~Hancox, W.~Li, F.~Milletari, H.~R. Roth, S.~Albarqouni, S.~Bakas, M.~N. Galtier, B.~A. Landman, and K.~Maier-Hein, ``The future of digital health with federated learning,'' \emph{NPJ digital medicine}, vol.~3, no.~1, p. 119, 2020, publisher: Nature Publishing Group UK London. [Online]. Available: \url{https://www.nature.com/articles/s41746-020-00323-1}
\BIBentrySTDinterwordspacing

\bibitem{kaissis_secure_2020}
\BIBentryALTinterwordspacing
G.~A. Kaissis, M.~R. Makowski, D.~Rückert, and R.~F. Braren, ``Secure, privacy-preserving and federated machine learning in medical imaging,'' \emph{Nature Machine Intelligence}, vol.~2, no.~6, pp. 305--311, 2020, publisher: Nature Publishing Group UK London. [Online]. Available: \url{https://www.nature.com/articles/s42256-020-0186-1}
\BIBentrySTDinterwordspacing

\bibitem{bonawitz_practical_2017}
\BIBentryALTinterwordspacing
K.~Bonawitz, V.~Ivanov, B.~Kreuter, A.~Marcedone, H.~B. McMahan, S.~Patel, D.~Ramage, A.~Segal, and K.~Seth, ``\BIBforeignlanguage{en}{Practical {Secure} {Aggregation} for {Privacy}-{Preserving} {Machine} {Learning}},'' in \emph{\BIBforeignlanguage{en}{Proceedings of the 2017 {ACM} {SIGSAC} {Conference} on {Computer} and {Communications} {Security}}}.\hskip 1em plus 0.5em minus 0.4em\relax Dallas Texas USA: ACM, Oct. 2017, pp. 1175--1191. [Online]. Available: \url{https://dl.acm.org/doi/10.1145/3133956.3133982}
\BIBentrySTDinterwordspacing

\bibitem{shor_polynomial-time_1999}
\BIBentryALTinterwordspacing
P.~W. Shor, ``\BIBforeignlanguage{en}{Polynomial-{Time} {Algorithms} for {Prime} {Factorization} and {Discrete} {Logarithms} on a {Quantum} {Computer}},'' \emph{\BIBforeignlanguage{en}{SIAM Review}}, vol.~41, no.~2, pp. 303--332, Jan. 1999. [Online]. Available: \url{http://epubs.siam.org/doi/10.1137/S0036144598347011}
\BIBentrySTDinterwordspacing

\bibitem{preskill_quantum_2019}
\BIBentryALTinterwordspacing
J.~Preskill, ``Quantum computing in the nisq era and beyond,'' \emph{Bulletin of the American Physical Society}, vol.~64, p.~9, 2019, publisher: American Physical Society. [Online]. Available: \url{http://131.215.198.54/ai19/talks/Preskill.pdf}
\BIBentrySTDinterwordspacing

\bibitem{bernstein_post-quantum_2017}
D.~Bernstein and T.~Lange, ``Post-quantum cryptography,'' \emph{NATURE}, vol. 549, no. 7671, pp. 188--194, Sep. 2017.

\bibitem{alagic_status_2022}
\BIBentryALTinterwordspacing
G.~Alagic, D.~Apon, D.~Cooper, Q.~Dang, T.~Dang, J.~Kelsey, J.~Lichtinger, C.~Miller, D.~Moody, R.~Peralta, R.~Perlner, A.~Robinson, D.~Smith-Tone, and Y.-K. Liu, ``\BIBforeignlanguage{en}{Status {Report} on the {Third} {Round} of the {NIST} {Post}-{Quantum} {Cryptography} {Standardization} {Process}},'' National Institute of Standards and Technology, Tech. Rep. NIST Internal or Interagency Report (NISTIR) 8413, Sep. 2022. [Online]. Available: \url{https://csrc.nist.gov/pubs/ir/8413/upd1/final}
\BIBentrySTDinterwordspacing

\bibitem{harris_decentralized_2019}
\BIBentryALTinterwordspacing
J.~D. Harris and B.~Waggoner, ``Decentralized and collaborative {AI} on blockchain,'' in \emph{2019 {IEEE} international conference on blockchain ({Blockchain})}.\hskip 1em plus 0.5em minus 0.4em\relax IEEE, 2019, pp. 368--375. [Online]. Available: \url{https://ieeexplore.ieee.org/abstract/document/8946257/}
\BIBentrySTDinterwordspacing

\bibitem{zhang_blockchain_2021}
\BIBentryALTinterwordspacing
P.~Zhang, L.~Wang, W.~Wang, K.~Fu, and J.~Wang, ``\BIBforeignlanguage{en}{A {Blockchain} {System} {Based} on {Quantum}-{Resistant} {Digital} {Signature}},'' \emph{\BIBforeignlanguage{en}{Security and Communication Networks}}, vol. 2021, pp. 1--13, Mar. 2021. [Online]. Available: \url{https://www.hindawi.com/journals/scn/2021/6671648/}
\BIBentrySTDinterwordspacing

\bibitem{bonawitz_towards_2019}
\BIBentryALTinterwordspacing
K.~Bonawitz, H.~Eichner, W.~Grieskamp, D.~Huba, A.~Ingerman, V.~Ivanov, C.~Kiddon, J.~Konečnỳ, S.~Mazzocchi, and B.~McMahan, ``Towards federated learning at scale: {System} design,'' \emph{Proceedings of machine learning and systems}, vol.~1, pp. 374--388, 2019. [Online]. Available: \url{https://proceedings.mlsys.org/paper_files/paper/2019/hash/7b770da633baf74895be22a8807f1a8f-Abstract.html}
\BIBentrySTDinterwordspacing

\bibitem{geyer_differentially_2018}
\BIBentryALTinterwordspacing
R.~C. Geyer, T.~Klein, and M.~Nabi, ``Differentially {Private} {Federated} {Learning}: {A} {Client} {Level} {Perspective},'' Mar. 2018, arXiv:1712.07557 [cs]. [Online]. Available: \url{http://arxiv.org/abs/1712.07557}
\BIBentrySTDinterwordspacing

\bibitem{agarwal_cpsgd_2018}
\BIBentryALTinterwordspacing
N.~Agarwal, A.~T. Suresh, F.~X.~X. Yu, S.~Kumar, and B.~McMahan, ``{cpSGD}: {Communication}-efficient and differentially-private distributed {SGD},'' \emph{Advances in Neural Information Processing Systems}, vol.~31, 2018. [Online]. Available: \url{https://proceedings.neurips.cc/paper/2018/hash/21ce689121e39821d07d04faab328370-Abstract.html}
\BIBentrySTDinterwordspacing

\bibitem{gursoy_privacy-preserving_2016}
\BIBentryALTinterwordspacing
M.~E. Gursoy, A.~Inan, M.~E. Nergiz, and Y.~Saygin, ``Privacy-preserving learning analytics: challenges and techniques,'' \emph{IEEE Transactions on Learning technologies}, vol.~10, no.~1, pp. 68--81, 2016, publisher: IEEE. [Online]. Available: \url{https://ieeexplore.ieee.org/abstract/document/7563858/}
\BIBentrySTDinterwordspacing

\bibitem{zhang_batchcrypt_2020}
\BIBentryALTinterwordspacing
C.~Zhang, S.~Li, J.~Xia, W.~Wang, F.~Yan, and Y.~Liu, ``\{{BatchCrypt}\}: {Efficient} homomorphic encryption for \{{Cross}-{Silo}\} federated learning,'' in \emph{2020 {USENIX} annual technical conference ({USENIX} {ATC} 20)}, 2020, pp. 493--506. [Online]. Available: \url{https://www.usenix.org/conference/atc20/presentation/zhang-chengliang}
\BIBentrySTDinterwordspacing

\bibitem{blanchard_machine_2017}
\BIBentryALTinterwordspacing
P.~Blanchard, E.~M. El~Mhamdi, R.~Guerraoui, and J.~Stainer, ``Machine learning with adversaries: {Byzantine} tolerant gradient descent,'' \emph{Advances in neural information processing systems}, vol.~30, 2017. [Online]. Available: \url{https://proceedings.neurips.cc/paper/2017/hash/f4b9ec30ad9f68f89b29639786cb62ef-Abstract.html}
\BIBentrySTDinterwordspacing

\bibitem{yin_byzantine-robust_2018}
\BIBentryALTinterwordspacing
D.~Yin, Y.~Chen, R.~Kannan, and P.~Bartlett, ``Byzantine-robust distributed learning: {Towards} optimal statistical rates,'' in \emph{International conference on machine learning}.\hskip 1em plus 0.5em minus 0.4em\relax Pmlr, 2018, pp. 5650--5659. [Online]. Available: \url{https://proceedings.mlr.press/v80/yin18a}
\BIBentrySTDinterwordspacing

\bibitem{alagic_status_2019}
\BIBentryALTinterwordspacing
G.~Alagic, G.~Alagic, J.~Alperin-Sheriff, D.~Apon, D.~Cooper, Q.~Dang, Y.-K. Liu, C.~Miller, D.~Moody, and R.~Peralta, ``Status report on the first round of the {NIST} post-quantum cryptography standardization process,'' 2019, publisher: US Department of Commerce, National Institute of Standards and Technology …. [Online]. Available: \url{https://tsapps.nist.gov/publication/get_pdf.cfm?pub_id=927303}
\BIBentrySTDinterwordspacing

\bibitem{stewart_committing_2018}
\BIBentryALTinterwordspacing
I.~Stewart, D.~Ilie, A.~Zamyatin, S.~Werner, M.~F. Torshizi, and W.~J. Knottenbelt, ``\BIBforeignlanguage{en}{Committing to quantum resistance: a slow defence for {Bitcoin} against a fast quantum computing attack},'' \emph{\BIBforeignlanguage{en}{Royal Society Open Science}}, vol.~5, no.~6, p. 180410, Jun. 2018. [Online]. Available: \url{https://royalsocietypublishing.org/doi/10.1098/rsos.180410}
\BIBentrySTDinterwordspacing

\bibitem{regev_lattices_2009}
\BIBentryALTinterwordspacing
O.~Regev, ``\BIBforeignlanguage{en}{On lattices, learning with errors, random linear codes, and cryptography},'' \emph{\BIBforeignlanguage{en}{Journal of the ACM}}, vol.~56, no.~6, pp. 1--40, Sep. 2009. [Online]. Available: \url{https://dl.acm.org/doi/10.1145/1568318.1568324}
\BIBentrySTDinterwordspacing

\bibitem{peikert_decade_2016}
\BIBentryALTinterwordspacing
C.~Peikert, ``A decade of lattice cryptography,'' \emph{Foundations and trends® in theoretical computer science}, vol.~10, no.~4, pp. 283--424, 2016, publisher: Now Publishers, Inc. [Online]. Available: \url{https://www.nowpublishers.com/article/Details/TCS-074}
\BIBentrySTDinterwordspacing

\bibitem{ducas_crystals-dilithium_2018}
\BIBentryALTinterwordspacing
L.~Ducas, E.~Kiltz, T.~Lepoint, V.~Lyubashevsky, P.~Schwabe, G.~Seiler, and D.~Stehlé, ``Crystals-dilithium: {A} lattice-based digital signature scheme,'' \emph{IACR Transactions on Cryptographic Hardware and Embedded Systems}, pp. 238--268, 2018. [Online]. Available: \url{https://tches.iacr.org/index.php/TCHES/article/view/839}
\BIBentrySTDinterwordspacing

\bibitem{kiltz_concrete_2018}
\BIBentryALTinterwordspacing
E.~Kiltz, V.~Lyubashevsky, and C.~Schaffner, ``\BIBforeignlanguage{en}{A {Concrete} {Treatment} of {Fiat}-{Shamir} {Signatures} in the {Quantum} {Random}-{Oracle} {Model}},'' in \emph{\BIBforeignlanguage{en}{Advances in {Cryptology} – {EUROCRYPT} 2018}}, J.~B. Nielsen and V.~Rijmen, Eds.\hskip 1em plus 0.5em minus 0.4em\relax Cham: Springer International Publishing, 2018, vol. 10822, pp. 552--586, series Title: Lecture Notes in Computer Science. [Online]. Available: \url{https://link.springer.com/10.1007/978-3-319-78372-7_18}
\BIBentrySTDinterwordspacing

\bibitem{lyubashevsky_lattice-based_2021}
\BIBentryALTinterwordspacing
V.~Lyubashevsky, ``Lattice-based digital signatures,'' \emph{National Science Review}, vol.~8, no.~9, p. nwab077, 2021, publisher: Oxford University Press. [Online]. Available: \url{https://academic.oup.com/nsr/article-abstract/8/9/nwab077/6272426}
\BIBentrySTDinterwordspacing

\bibitem{bernstein_sphincs_2015}
\BIBentryALTinterwordspacing
D.~J. Bernstein, D.~Hopwood, A.~Hülsing, T.~Lange, R.~Niederhagen, L.~Papachristodoulou, M.~Schneider, P.~Schwabe, and Z.~Wilcox-O’Hearn, ``\BIBforeignlanguage{en}{{SPHINCS}: {Practical} {Stateless} {Hash}-{Based} {Signatures}},'' in \emph{\BIBforeignlanguage{en}{Advances in {Cryptology} -- {EUROCRYPT} 2015}}, E.~Oswald and M.~Fischlin, Eds.\hskip 1em plus 0.5em minus 0.4em\relax Berlin, Heidelberg: Springer Berlin Heidelberg, 2015, vol. 9056, pp. 368--397, series Title: Lecture Notes in Computer Science. [Online]. Available: \url{http://link.springer.com/10.1007/978-3-662-46800-5_15}
\BIBentrySTDinterwordspacing

\bibitem{buchmann_security_2011}
J.~Buchmann, E.~Dahmen, S.~Ereth, A.~Hülsing, and M.~Rückert, ``\BIBforeignlanguage{en}{On the {Security} of the {Winternitz} {One}-{Time} {Signature} {Scheme}},'' in \emph{\BIBforeignlanguage{en}{Progress in {Cryptology} – {AFRICACRYPT} 2011}}, ser. Lecture {Notes} in {Computer} {Science}, A.~Nitaj and D.~Pointcheval, Eds.\hskip 1em plus 0.5em minus 0.4em\relax Berlin, Heidelberg: Springer, 2011, pp. 363--378.

\bibitem{paquin_benchmarking_2020}
\BIBentryALTinterwordspacing
C.~Paquin, D.~Stebila, and G.~Tamvada, ``\BIBforeignlanguage{en}{Benchmarking {Post}-quantum {Cryptography} in {TLS}},'' in \emph{\BIBforeignlanguage{en}{Post-{Quantum} {Cryptography}}}, J.~Ding and J.-P. Tillich, Eds.\hskip 1em plus 0.5em minus 0.4em\relax Cham: Springer International Publishing, 2020, vol. 12100, pp. 72--91, series Title: Lecture Notes in Computer Science. [Online]. Available: \url{http://link.springer.com/10.1007/978-3-030-44223-1_5}
\BIBentrySTDinterwordspacing

\bibitem{basu_nist_2019}
\BIBentryALTinterwordspacing
K.~Basu, D.~Soni, M.~Nabeel, and R.~Karri, ``Nist post-quantum cryptography-a hardware evaluation study,'' \emph{Cryptology ePrint Archive}, 2019. [Online]. Available: \url{https://eprint.iacr.org/2019/047}
\BIBentrySTDinterwordspacing

\bibitem{bindel_transitioning_2017}
\BIBentryALTinterwordspacing
N.~Bindel, U.~Herath, M.~McKague, and D.~Stebila, ``\BIBforeignlanguage{en}{Transitioning to a {Quantum}-{Resistant} {Public} {Key} {Infrastructure}},'' in \emph{\BIBforeignlanguage{en}{Post-{Quantum} {Cryptography}}}, T.~Lange and T.~Takagi, Eds.\hskip 1em plus 0.5em minus 0.4em\relax Cham: Springer International Publishing, 2017, vol. 10346, pp. 384--405, series Title: Lecture Notes in Computer Science. [Online]. Available: \url{https://link.springer.com/10.1007/978-3-319-59879-6_22}
\BIBentrySTDinterwordspacing

\bibitem{xu_laf_2022}
\BIBentryALTinterwordspacing
P.~Xu, M.~Hu, T.~Chen, W.~Wang, and H.~Jin, ``{LaF}: {Lattice}-based and communication-efficient federated learning,'' \emph{IEEE Transactions on Information Forensics and Security}, vol.~17, pp. 2483--2496, 2022, publisher: IEEE. [Online]. Available: \url{https://ieeexplore.ieee.org/abstract/document/9808160/}
\BIBentrySTDinterwordspacing

\bibitem{ding_post-quantum_2024}
\BIBentryALTinterwordspacing
H.~Ding, H.~Tang, C.~Jia, and Y.~Wang, ``Post-{Quantum} {Privacy}-{Preserving} {Federated} {Learning} via {Anti}-{Gradients} {Leakage} {Based} on {Secure} {Multi}-party {Computation} {Techniques},'' in \emph{2024 4th {International} {Conference} on {Blockchain} {Technology} and {Information} {Security} ({ICBCTIS})}.\hskip 1em plus 0.5em minus 0.4em\relax IEEE, 2024, pp. 239--242. [Online]. Available: \url{https://ieeexplore.ieee.org/abstract/document/10805343/}
\BIBentrySTDinterwordspacing

\bibitem{commey_post-quantum_2025}
\BIBentryALTinterwordspacing
D.~Commey, S.~G. Hounsinou, and G.~V. Crosby, ``Post-{Quantum} {Secure} {Blockchain}-{Based} {Federated} {Learning} {Framework} for {Healthcare} {Analytics},'' \emph{IEEE Networking Letters}, pp. 1--1, 2025. [Online]. Available: \url{https://ieeexplore.ieee.org/document/10973102}
\BIBentrySTDinterwordspacing

\bibitem{ranganthan_decentralized_2018}
\BIBentryALTinterwordspacing
V.~P. Ranganthan, R.~Dantu, A.~Paul, P.~Mears, and K.~Morozov, ``A decentralized marketplace application on the ethereum blockchain,'' in \emph{2018 {IEEE} 4th {International} {Conference} on {Collaboration} and {Internet} {Computing} ({CIC})}.\hskip 1em plus 0.5em minus 0.4em\relax IEEE, 2018, pp. 90--97. [Online]. Available: \url{https://ieeexplore.ieee.org/abstract/document/8537821/}
\BIBentrySTDinterwordspacing

\bibitem{kim_blockchained_2019}
H.~Kim, J.~Park, M.~Bennis, and S.-L. Kim, ``Blockchained on-device federated learning,'' \emph{IEEE Communications Letters}, vol.~24, no.~6, pp. 1279--1283, 2019, publisher: IEEE.

\bibitem{ramanan_baffle_2020}
\BIBentryALTinterwordspacing
P.~Ramanan and K.~Nakayama, ``Baffle: {Blockchain} based aggregator free federated learning,'' in \emph{2020 {IEEE} international conference on blockchain ({Blockchain})}.\hskip 1em plus 0.5em minus 0.4em\relax IEEE, 2020, pp. 72--81. [Online]. Available: \url{https://ieeexplore.ieee.org/abstract/document/9284684/}
\BIBentrySTDinterwordspacing

\bibitem{kang_incentive_2019}
\BIBentryALTinterwordspacing
J.~Kang, Z.~Xiong, D.~Niyato, H.~Yu, Y.-C. Liang, and D.~I. Kim, ``Incentive design for efficient federated learning in mobile networks: {A} contract theory approach,'' in \emph{2019 {IEEE} {VTS} {Asia} {Pacific} {Wireless} {Communications} {Symposium} ({APWCS})}.\hskip 1em plus 0.5em minus 0.4em\relax IEEE, 2019, pp. 1--5. [Online]. Available: \url{https://ieeexplore.ieee.org/abstract/document/8851649/}
\BIBentrySTDinterwordspacing

\bibitem{huang_collaboration_2024}
G.~Huang, Q.~Wu, P.~Sun, Q.~Ma, and X.~Chen, ``Collaboration in {Federated} {Learning} {With} {Differential} {Privacy}: {A} {Stackelberg} {Game} {Analysis},'' \emph{IEEE Trans. Parallel Distributed Syst.}, vol.~35, no.~3, pp. 455--469, 2024.

\bibitem{ma_blockchain-based_2020}
\BIBentryALTinterwordspacing
J.~Ma, S.-Y. Lin, X.~Chen, H.-M. Sun, Y.-C. Chen, and H.~Wang, ``A blockchain-based application system for product anti-counterfeiting,'' \emph{IEEE Access}, vol.~8, pp. 77\,642--77\,652, 2020, publisher: IEEE. [Online]. Available: \url{https://ieeexplore.ieee.org/abstract/document/8985337/}
\BIBentrySTDinterwordspacing

\bibitem{qammar_securing_2023}
\BIBentryALTinterwordspacing
A.~Qammar, A.~Karim, H.~Ning, and J.~Ding, ``\BIBforeignlanguage{en}{Securing federated learning with blockchain: a systematic literature review},'' \emph{\BIBforeignlanguage{en}{Artificial Intelligence Review}}, vol.~56, no.~5, pp. 3951--3985, May 2023. [Online]. Available: \url{https://link.springer.com/10.1007/s10462-022-10271-9}
\BIBentrySTDinterwordspacing

\bibitem{arachchige_analysis_2024}
\BIBentryALTinterwordspacing
K.~G. Arachchige, P.~Branch, and J.~But, ``An analysis of blockchain-based iot sensor network distributed denial of service attacks,'' \emph{Sensors}, vol.~24, no.~10, p. 3083, 2024, publisher: MDPI. [Online]. Available: \url{https://www.mdpi.com/1424-8220/24/10/3083}
\BIBentrySTDinterwordspacing

\bibitem{pokhrel_federated_2020}
\BIBentryALTinterwordspacing
S.~R. Pokhrel and J.~Choi, ``Federated learning with blockchain for autonomous vehicles: {Analysis} and design challenges,'' \emph{IEEE Transactions on Communications}, vol.~68, no.~8, pp. 4734--4746, 2020, publisher: IEEE. [Online]. Available: \url{https://ieeexplore.ieee.org/abstract/document/9079513/}
\BIBentrySTDinterwordspacing

\bibitem{qu_blockchained_2020}
\BIBentryALTinterwordspacing
Y.~Qu, S.~R. Pokhrel, S.~Garg, L.~Gao, and Y.~Xiang, ``A blockchained federated learning framework for cognitive computing in industry 4.0 networks,'' \emph{IEEE Transactions on Industrial Informatics}, vol.~17, no.~4, pp. 2964--2973, 2020, publisher: IEEE. [Online]. Available: \url{https://ieeexplore.ieee.org/abstract/document/9134967/}
\BIBentrySTDinterwordspacing

\bibitem{weng_deepchain_2019}
\BIBentryALTinterwordspacing
J.~Weng, J.~Weng, J.~Zhang, M.~Li, Y.~Zhang, and W.~Luo, ``Deepchain: {Auditable} and privacy-preserving deep learning with blockchain-based incentive,'' \emph{IEEE Transactions on Dependable and Secure Computing}, vol.~18, no.~5, pp. 2438--2455, 2019, publisher: IEEE. [Online]. Available: \url{https://ieeexplore.ieee.org/abstract/document/8894364/?}
\BIBentrySTDinterwordspacing

\bibitem{chalkias_blockchained_2018}
\BIBentryALTinterwordspacing
K.~Chalkias, J.~Brown, M.~Hearn, T.~Lillehagen, I.~Nitto, and T.~Schroeter, ``Blockchained post-quantum signatures,'' in \emph{2018 {IEEE} {International} {Conference} on {Internet} of {Things} ({iThings}) and {IEEE} {Green} {Computing} and {Communications} ({GreenCom}) and {IEEE} {Cyber}, {Physical} and {Social} {Computing} ({CPSCom}) and {IEEE} {Smart} {Data} ({SmartData})}.\hskip 1em plus 0.5em minus 0.4em\relax IEEE, 2018, pp. 1196--1203. [Online]. Available: \url{https://ieeexplore.ieee.org/abstract/document/8726842/}
\BIBentrySTDinterwordspacing

\bibitem{waterland_quantum_2016}
P.~Waterland, ``Quantum resistant ledger (qrl),'' \emph{Retrieved from}, 2016.

\bibitem{sun_logic_2021}
\BIBentryALTinterwordspacing
X.~Sun, P.~Kulicki, and M.~Sopek, ``Logic programming with post-quantum cryptographic primitives for smart contract on quantum-secured blockchain,'' \emph{Entropy}, vol.~23, no.~9, p. 1120, 2021, publisher: MDPI. [Online]. Available: \url{https://www.mdpi.com/1099-4300/23/9/1120}
\BIBentrySTDinterwordspacing

\bibitem{gurung_quantum_2023}
\BIBentryALTinterwordspacing
D.~Gurung, S.~R. Pokhrel, and G.~Li, ``Quantum {Federated} {Learning}: {Analysis}, {Design} and {Implementation} {Challenges},'' Jun. 2023, arXiv:2306.15708 [quant-ph]. [Online]. Available: \url{http://arxiv.org/abs/2306.15708}
\BIBentrySTDinterwordspacing

\bibitem{gurung_performance_2023}
\BIBentryALTinterwordspacing
------, ``Performance {Analysis} and {Evaluation} of {Post} {Quantum} {Secure} {Blockchained} {Federated} {Learning},'' Jul. 2023, arXiv:2306.14772 [cs]. [Online]. Available: \url{http://arxiv.org/abs/2306.14772}
\BIBentrySTDinterwordspacing

\bibitem{gurung_secure_2023}
\BIBentryALTinterwordspacing
------, ``Secure {Communication} {Model} {For} {Quantum} {Federated} {Learning}: {A} {Post} {Quantum} {Cryptography} ({PQC}) {Framework},'' Apr. 2023, arXiv:2304.13413 [cs]. [Online]. Available: \url{http://arxiv.org/abs/2304.13413}
\BIBentrySTDinterwordspacing

\bibitem{gharavi_pqbfl_2025}
\BIBentryALTinterwordspacing
H.~GHaravi, J.~Granjal, and E.~Monteiro, ``{PQBFL}: {A} {Post}-{Quantum} {Blockchain}-based {Protocol} for {Federated} {Learning},'' Feb. 2025, arXiv:2502.14464 [cs]. [Online]. Available: \url{http://arxiv.org/abs/2502.14464}
\BIBentrySTDinterwordspacing

\bibitem{gharavi_post-quantum_2024}
\BIBentryALTinterwordspacing
H.~Gharavi, J.~Granjal, and E.~Monteiro, ``Post-quantum blockchain security for the {Internet} of {Things}: {Survey} and research directions,'' \emph{IEEE Communications Surveys \& Tutorials}, 2024, publisher: IEEE. [Online]. Available: \url{https://ieeexplore.ieee.org/abstract/document/10401941/}
\BIBentrySTDinterwordspacing

\bibitem{xu_post_2023}
\BIBentryALTinterwordspacing
R.~Xu, S.~R. Pokhrel, Q.~Lan, and G.~Li, ``Post {Quantum} {Secure} {Blockchain}-based {Federated} {Learning} for {Mobile} {Edge} {Computing},'' Feb. 2023, arXiv:2302.13258 [cs]. [Online]. Available: \url{http://arxiv.org/abs/2302.13258}
\BIBentrySTDinterwordspacing

\bibitem{rahman_blockchain_2019}
\BIBentryALTinterwordspacing
M.~A. Rahman, M.~M. Rashid, M.~S. Hossain, E.~Hassanain, M.~F. Alhamid, and M.~Guizani, ``Blockchain and {IoT}-based cognitive edge framework for sharing economy services in a smart city,'' \emph{Ieee Access}, vol.~7, pp. 18\,611--18\,621, 2019, publisher: IEEE. [Online]. Available: \url{https://ieeexplore.ieee.org/abstract/document/8629866/}
\BIBentrySTDinterwordspacing

\bibitem{chu_cryptoqfl_2023}
\BIBentryALTinterwordspacing
C.~Chu, L.~Jiang, and F.~Chen, ``{CryptoQFL}: {Quantum} {Federated} {Learning} on {Encrypted} {Data},'' in \emph{2023 {IEEE} {International} {Conference} on {Quantum} {Computing} and {Engineering} ({QCE})}, vol.~01, Sep. 2023, pp. 1231--1237. [Online]. Available: \url{https://ieeexplore.ieee.org/abstract/document/10313628}
\BIBentrySTDinterwordspacing

\bibitem{sorbera_adaptive_2025}
\BIBentryALTinterwordspacing
E.~Sorbera, F.~Zanetti, G.~Brandi, A.~Tomasi, R.~Doriguzzi-Corin, and S.~Ranise, ``Adaptive {Federated} {Learning} with {Functional} {Encryption}: {A} {Comparison} of {Classical} and {Quantum}-safe {Options},'' Apr. 2025, arXiv:2504.00563 [cs]. [Online]. Available: \url{http://arxiv.org/abs/2504.00563}
\BIBentrySTDinterwordspacing

\bibitem{lyubashevsky_crystals-dilithium_2020}
\BIBentryALTinterwordspacing
V.~Lyubashevsky, L.~Ducas, E.~Kiltz, T.~Lepoint, P.~Schwabe, G.~Seiler, D.~Stehlé, and S.~Bai, ``Crystals-dilithium,'' \emph{Algorithm Specifications and Supporting Documentation}, 2020. [Online]. Available: \url{https://csrc.nist.gov/CSRC/media/Presentations/crystals-dilithium-round-3-presentation/images-media/session-1-crystals-dilithium-lyubashevsky.pdf}
\BIBentrySTDinterwordspacing

\bibitem{albert_gasol_2020}
\BIBentryALTinterwordspacing
E.~Albert, J.~Correas, P.~Gordillo, G.~Román-Díez, and A.~Rubio, ``\BIBforeignlanguage{en}{{GASOL}: {Gas} {Analysis} and {Optimization} for {Ethereum} {Smart} {Contracts}},'' in \emph{\BIBforeignlanguage{en}{Tools and {Algorithms} for the {Construction} and {Analysis} of {Systems}}}, A.~Biere and D.~Parker, Eds.\hskip 1em plus 0.5em minus 0.4em\relax Cham: Springer International Publishing, 2020, vol. 12079, pp. 118--125, series Title: Lecture Notes in Computer Science. [Online]. Available: \url{http://link.springer.com/10.1007/978-3-030-45237-7_7}
\BIBentrySTDinterwordspacing

\end{thebibliography}

\end{document}